\documentclass[oribibl]{llncs}
\usepackage{a4wide}
\usepackage{amsmath}
\usepackage{amssymb}
\usepackage{graphicx}
\usepackage[noend,ruled]{algorithm2e}
\usepackage[T1]{fontenc}

\newtheorem{observation}{Observation}{\bfseries}{}
{\bfseries}{}

\newcommand{\floor}[1]{\lfloor #1\rfloor}
\newcommand{\ceil}[1]{\lceil #1\rceil}

\title{Single-Parameter Combinatorial Auctions\\ with Partially Public Valuations}

\author{Gagan Goel \and Chinmay Karande \and Lei Wang}

\institute{Georgia Institute of Technology, Atlanta.\\ \email{\{gagang, ckarande, lwang\}@cc.gatech.edu}}

\begin{document}

\date{}
\maketitle
\begin{abstract}
We consider the problem of designing truthful auctions, when the bidders' valuations have a public and a private component. In particular, we consider combinatorial auctions where the valuation of an agent $i$ for a set $S$ of items can be expressed as $v_if(S)$, where $v_i$ is a private single parameter of the agent, and the function $f$ is publicly known. Our motivation behind studying this problem is two-fold: (a) Such valuation functions arise naturally in the case of ad-slots in broadcast media such as Television and Radio. For an ad shown in a set $S$ of ad-slots, $f(S)$ is, say, the number of {\em unique} viewers reached by the ad, and $v_i$ is the valuation per-unique-viewer. (b) From a theoretical point of view, this factorization of the valuation function simplifies the bidding language, and renders the combinatorial auction more amenable to better approximation factors. We present a general technique, based on maximal-in-range mechanisms, that converts any $\alpha$-approximation non-truthful algorithm ($\alpha \leq 1$) for this problem into $\Omega(\frac{\alpha}{\log{n}})$ and $\Omega(\alpha)$-approximate truthful mechanisms which run in polynomial time and quasi-polynomial time, respectively.
\end{abstract}

\section{Introduction}

A central problem in computational mechanism design is that of combinatorial auctions, in which an auctioneer wants to sell a heterogeneous set of items $\mathcal{J}$ to interested agents. Each agent $i$ has a valuation function $v_i(.)$ which describes her valuation $v_i(S)$ for every set $S \subseteq \mathcal{J}$ of items. In its most general form, the entire valuation function is assumed to be private information which may not be revealed truthfully by the agents. Maximizing the social welfare in a combinatorial auction with an incentive-compatible mechanism is an important open problem. However, recent results \cite{DN07, sodapaper} have established polynomial lower bounds   on the approximation ratio of maximal-in-range mechanisms - which account for a majority of positive results in mechanism design - even when all the valuations are assumed to be submodular. On the other hand, in the non-game-theoretic case, if all the agents' valuations are public knowledge and hence truthfully known, then we can maximize the social welfare to much better factors \cite{DNS05, 1109675, Vondrak08}, under varying degree of restrictions on the valuations. In this paper, we introduce a model that lies in between these two extremes.

We wish to explore the setting when some inherent property of the items induces a common and publicly known {\em partial} information about the valuation function of the buyers. For instance, in  position auctions in sponsored search, the agents' valuation for a position consists of a private value-per-click as well as a public click-through rate, that is known to the auctioneer.  Another situation where such private/public factorization of valuations arises is advertisements in broadcast media such as Television and Radio. Suppose we are selling TV ad-slots on a television network. There are $m$ ad-slots and $n$ advertisers interested in them. Let us define a function $f: 2^{[m]} \rightarrow \mathbb{Z}_+$, such that for any set $S$ of ad-slots $f(S)$ is the number of {\em unique} viewers who will see the ad if the ad is shown on each slot in $S$\footnote{For a single ad-slot $j$, the function $f(\{j\})$ is nothing but the television rating for that slot as computed by rating agencies such as Nielsen. In fact, their data collection through set-top boxes results in a TV slot-viewer bipartite graph on the sample population, from which $f(S)$ can be estimated for any set $S$ of ad slots.}. If an advertiser $i$ is willing to pay $v_i$ dollars per unique viewer reached by her ad, then her total valuation of the set $S$ is $v_if(S)$. 

With this background, we define the following class of problems:

\noindent \textbf{\textsc{Single-parameter combinatorial auctions with partially public valuations}}: We are given a set $\mathcal{J}$ of $m$ items and a global \emph{public} valuation function\footnote{We do not make any explicit assumptions such as non-negativity or free disposal about the function $f$. We provide a method to convert any non-truthful black-box algorithm into a truthful mechanism. This black-box algorithm may make some implicit assumptions about $f$.} $f\ :\ 2^{\mathcal{J}} \rightarrow \mathbb{R}$. The function $f$ can either be specified explicitly or via an oracle which takes a set $S$ as input and returns $f(S)$. In addition, we have $n$ agents each of whom has a \emph{private} multiplier $v_i$ such that the item set $S$ provides $v_if(S)$ amount of utility to agent $i$. The goal is to design a truthful mechanism which maximizes $\sum_i v_if(S_i)$, where $S_1\cdots S_n$ is a partition of $\mathcal{J}$. \\

One can think of this model as combinatorial auctions with \emph{simplified bidding language}. The agents only need to specify one parameter $v_i$ as their bid. Moreover, our problem has deeper theoretical connections to the area of single-parameter mechanism design in general. For single-parameter domains such as ours, it is known that {\em monotone} allocation rules characterize the set of all truthful mechanisms. An allocation rule or algorithm is said to be monotone if the allocation parameter of an agent ($f(S_i)$ in our case) is non-decreasing in his reported bid $v_i$. Unfortunately, often it is the case that good approximation algorithms known for a given class of valuation functions are not monotonic. It is an important and well-known open question in algorithmic mechanism design to resolve whether the design of monotone algorithms is fundamentally harder than the non-monotone ones. In other words, it is not known if, for single-parameter problems, we can always convert any $\alpha$-approximation algorithm into a truthful mechanism with the same factor. We believe that our problem is a suitable candidate to attack this question as it gives a lot of flexibility in defining the complexity of function $f$. From this discussion, it follows that the only lower bound known for the approximation factor of a truthful mechanism in our setting is the hardness of approximation of the underlying optimization problem.

\subsubsection*{Our Results and techniques} 

We give a general technique which accepts any (possibly non-truthful) $\alpha$-approximation algorithm for our problem as a black-box and uses it to construct a truthful mechanism with an approximation factor of $\Omega\left(\frac{\alpha}{\log{n}}\right)$. We also give a truthful mechanism with factor $\Omega(\alpha)$ which runs in time $O\left(n^{\log\log{n}}\right)$. Both these results are corollaries obtained by setting parameters appropriately in Theorem \ref{theorem1} to achieve desired trade-off between the approximation factor and the running time. Our results can also be interpreted as converting non-monotone algorithms into monotone ones for the above model. 

Our mechanisms are {\em maximal-in-range}, \textit{i.e.}, they fix a range $\mathcal{R}$ of allocations and compute the allocation $\mathbf{S} \in \mathcal{R}$ that maximizes the social welfare. The technical core of our work lies in careful construction of this range.

While the black-box algorithm may be randomized, our mechanism does not introduce any further randomization. Depending upon whether the black-box algorithm is deterministic or randomized, our mechanism is deterministically truthful or universally truthful respectively (See Section \ref{prelim} for definitions). The approximation factor of our mechanism is deterministic (or with high probability or in expectation) if the black-box algorithm also provides the approximation guarantees deterministically (or with high probability or in expectation).

Note that we don't need to worry about how the public valuation function $f$ is specified. This is plausible since the function is accessed only from within the black-box algorithm. Hence, our mechanism can be applied to any model of specification - whether it is specified explicitly or through a value or demand oracle - using the corresponding approximation algorithm from that model. 

Submodular valuations arise naturally in practice from economies of scale or the law of diminishing returns. Hence, we make a special note of our results when the public valuation is submodular. Using the algorithm of \cite{Vondrak08} as black-box, our results imply a $\Omega\left(1/\log{n}\right)$ and $\Omega(1)$ approximation factors in polynomial time and quasi-polynomial time, respectively. We would like to note that the standard greedy algorithm for submodular welfare maximization is not monotone (See Appendix \ref{app1} for a simple example) and hence, not truthful. Similarly, the optimal approximation algorithm of \cite{Vondrak08} is also not known to be non-monotone. The best known truthful mechanism for combinatorial auctions with entirely private submodular valuations \cite{DNS05} has $\Omega(1/\sqrt{m})$ approximation factor.

\subsubsection*{Future Directions}

As shown in \cite{DN07,sodapaper}, it seems that designing a truthful mechanism with good approximation factor for maximizing social welfare is a difficult problem. In light of this, our work suggests an important research direction to pursue in combinatorial auctions- to divide the valuation function into a part which is common among all the agents and can be estimated by the auctioneer and a part which is unique and private to individual agents. 

Also, it would be interesting to see if for submodular public functions (or even more specifically, for coverage functions), which have concrete motivation in TV ad auctions, one can design a constant factor polynomial time truthful mechanism.

\subsubsection*{Related Work}

When agents have a general multi-parameter valuation function, the best known truthful approximation of social welfare in the value oracle model is $\Omega(\sqrt{\log{m}}/m)$ \cite{holzman04}. Under subadditive valuation functions, \cite{DNS05} gave $\Omega(1/\sqrt{m})$-approximate truthful mechanism. It is known that no maximal-in-range mechanism making polynomially many calls to the value oracle can have an approximation factor better than $\Omega(1/m^{1/6})$\cite{DN07} even for the case of submodular valuation functions. A similar $\Omega(1/\sqrt{m})$ hardness result for maximal-in-range algorithms based on $\mathrm{NP} \nsubseteq \mathrm{P}/\mathrm{poly}$ appears in \cite{sodapaper}. See \cite{nisan.chapter} for a comprehensive survey of the results, and \cite{PSS08, sodapaper} for other more recent work. Previous work on the single-parameter case of combinatorial auctions have primarily focused on the {\em single-minded} bidders. In this setting, any bidder $i$ is only interested in single set $S_i$ and has a valuation $v_i$ for it. Lehmann et al. \cite{LOS02} gave a truthful mechanism which achieves an essentially best-possible approximation factor of $\Omega(1/\sqrt{m})$. For other results in single-minded combinatorial auction, see \cite{MN02,APTT03}. When the desired set is publicly known and only the valuation is private, \cite{BLP09} gave a general technique which converts any $\alpha$-approximation algorithm into a truthful mechanism with factor $\alpha/\log(v_{max})$. This result is very much in spirit to our work, however the model and the techniques used in the two papers are very different. Similarly, \cite{LS05} present a general framework which uses a gap-verifying linear program as black-box to construct mechanisms that are truthful in expectation.

For the non-truthful optimization, we note that our problem is hard up to a constant factor (see \cite{MSV08}) even when all the agents have private value equal to 1 and with common valuation function being submodular. For designing monotone algorithms from non-monotone algorithms in the Bayesian setting, see \cite{HL09}. We also note that TV ad auctions are in use by Google Inc. (see \cite{Nisan09}), although currently they treat the valuations for a set of ad-slots as additive with budget constraints, which yields a multi-parameter auction.\\ 

\noindent \textbf{Organization}: Section \ref{prelim} provides a brief introduction to mechanism design with a few concepts relevant to our work. Readers familiar with design of truthful mechanisms can skip to Section \ref{basic} in which we state some basic properties and assumptions about single parameter combinatorial auctions with partially public valuations. Section \ref{vfit} introduces our vector-fitting technique and in Section \ref{simple}, we conduct a warm-up exercise by analyzing a simple mechanism. Section \ref{main} presents our main result, a vector-fitting mechanism formalized by Theorem \ref{theorem1}.

\section{Preliminaries}
\label{prelim}
In this section, we will outline the basic concepts in mechanism design relevant to our paper.

\subsection{Truthfulness and Mechanism Design}
\label{prelim1}
Mechanism design attempts to address the game-theoretic aspect of optimization problems. Let $\mathcal{A}$ be the set of alternatives, and $u_i(a)$ be the valuation of agent $i$ if alternative $a\in \mathcal{A}$ is picked. In a pure optimization setting, all the functions $u_i$'s are assumed to be known to the auctioneer, and a typical goal is to pick an alternative $a\in \mathcal{A}$ that maximizes $\sum_i u_i(a)$. But from a game-theoretic perspective, the agents may have an incentive to lie about their valuation function $u_i$, if it leads to a better alternative for them. This kind of strategizing often results in arbitrary behaviour from the agents, leading to a loss in the social welfare. Mechanism design tackles this issue by designing algorithms  such that truthfully reporting their true valuation function is the dominant strategy for each agent, \textit{i.e.} given any strategies by all the other agents, reporting one's true function maximizes the utility gained by this agent.

There are three notions of truthfulness that may be applicable:
\begin{enumerate}
\item \textbf{Deterministic truthfulness}: The mechanism must be deterministic and an agent maximizes her utility by reporting her true valuation, for any valuations of all other agents.
\item \textbf{Universal truthfulness}: A universally truthful mechanism is a probability distribution over deterministically truthful mechanisms.
\item \textbf{Truthfulness in expectation}: A mechanism is truthful in expectation if an agent maximizes her \emph{expected utility} by being truthful.
\end{enumerate}

Every deterministically truthful mechanism is universally truthful and every universally truthful mechanism is truthful in expectation. Hence, deterministic truthfulness is the strictest notion of truthfulness. As noted earlier, our mechanism may be deterministically or universally truthful depending upon whether the black-box $\alpha$-approximation algorithm is deterministic or randomized.

\subsection{Vickrey-Clarke-Grove and Maximal-in-range Mechanisms}

The Vickrey-Clarke-Grove (VCG) mechanism is a pivotal result in the field of mechanism design to maximize social welfare. It works as follows: let $a^*$ and $a^*_{-i}$ be the alternatives which maximizes $\sum_j v_j(a)$ and $\sum_{j\neq i} v_j(a)$ respectively. Now define payment $p_i$ of agent $i$ to be $\sum_{j\neq i} v_j(a^*_{-i}) - \sum_{j \neq i} v_j(a^*)$. It is now not difficult to see that with this payment function, it is in best interest of every agent to report their true valuations, irrespective of what others report.

As useful as the VCG mechanism is, it cannot be applied in many scenarios where the underlying problem is hard. Solving the optimization problem approximately doesn't preserve the truthfulness always. To overcome this, maximal-in-range variant of the VCG mechanism is a useful technique which optimizes over a smaller range of allocations. That is, the set of allocations that the mechanism may ever produce - the \emph{range} - is chosen to be a small subset of the space of all allocations. The range is chosen to balance the following trade-off: A larger range can yield better approximation but require greater computational complexity. Note that such a range needs to be defined combinatorially \textit{without} any knowledge of the agents' valuations.

For example, the $\Omega(1/\sqrt{m})$-approximate truthful mechanism from \cite{DNS05} is a maximal-in-range mechanism.

\section{Notations and Basic Properties}

\label{basic}

By boldface $\mathbf{v}$, we will denote a vector of private multipliers of the agents, where $v_i$ is the multiplier of agent $i$. For a constant $\beta \geq 0$, let $\beta\mathbf{v} = (\beta v_1, \beta v_2, ..., \beta v_n)$. By boldface $\mathbf{S}$, we will denote the vector of allocations, where $S_i$ is the set of items allocated to agent $i$. We will overload the function symbol $f$ to express the social welfare as: $f(\mathbf{v}, \mathbf{S}) = \sum_{i}{v_if(S_i)}$. An allocation $\mathbf{S}$ is \emph{optimal} for a multiplier vector $\mathbf{v}$ if it maximizes $f(\mathbf{v}, \mathbf{S})$.

We begin by observing two simple properties of our problem and its solutions: \emph{symmetry} and \emph{scale-freeness}. Our problem and its solutions are symmetric, \textit{i.e.}, invariant under relabeling of agents in the following sense: Let $\mathbf{v}$ be any multiplier vector, $\mathbf{S}$ be any allocation and $\pi$ be any permutation of $[n]$. Let $\mathbf{u}$ and $\mathbf{T}$ be such that $u_i = v_{\pi(i)}$ and $T_i = S_{\pi(i)}$. Then clearly, $f(\mathbf{v}, \mathbf{S})=\ f(\mathbf{u}, \mathbf{T})$. The problem and its solutions are also invariant under scaling, since we have $f(\beta\mathbf{v}, \mathbf{S})\ =\ \beta\cdot f(\mathbf{v}, \mathbf{S})$.

The above properties lead us to:

\begin{observation}
\label{assum1}
\emph{Without loss of generality}, every multiplier vector $\mathbf{v}$ has non-increasing entries $v_1 \geq v_2 \geq ... \geq v_n$ such that $\sum_{i}v_i = 1$.
\end{observation}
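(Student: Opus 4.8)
The plan is to justify the reduction in Observation~\ref{assum1} purely via the symmetry and scale-freeness properties established just above, by exhibiting an explicit equivalence between arbitrary instances and instances satisfying the stated normal form. First I would handle the ordering: given an arbitrary multiplier vector $\mathbf{v}$, let $\pi$ be a permutation of $[n]$ that sorts the entries, i.e.\ $v_{\pi(1)} \geq v_{\pi(2)} \geq \dots \geq v_{\pi(n)}$. By the symmetry property, any mechanism $M$ acting on sorted vectors can be turned into a mechanism $M'$ on arbitrary vectors by first permuting the input into sorted order, running $M$, and then permuting the output allocation back via $\pi^{-1}$. The identity $f(\mathbf{v},\mathbf{S}) = f(\mathbf{u},\mathbf{T})$ with $u_i = v_{\pi(i)}$, $T_i = S_{\pi(i)}$ guarantees that the social welfare (and hence the approximation ratio relative to the optimum, which is itself symmetric) is preserved. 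One should also note that this relabeling is benign from an incentive standpoint: since the sorting is a deterministic function of the reported bids and the payment of each agent is computed consistently under the relabeling, truthfulness is unaffected.

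Next I would handle the normalization $\sum_i v_i = 1$. Here the degenerate case $\mathbf{v} = \mathbf{0}$ must be dispatched separately — when every multiplier is zero, every allocation yields social welfare $0$, so the mechanism may output anything (e.g.\ the empty allocation) and is trivially optimal and truthful. Assuming $\sum_i v_i = c > 0$, set $\beta = 1/c$ and consider the scaled vector $\beta\mathbf{v}$, which has entries summing to $1$. By scale-freeness, $f(\beta\mathbf{v},\mathbf{S}) = \beta f(\mathbf{v},\mathbf{S})$ for every allocation $\mathbf{S}$; consequently an allocation is optimal for $\mathbf{v}$ if and only if it is optimal for $\beta\mathbf{v}$, and more generally any $\alpha$-approximate allocation for one is an $\alpha$-approximate allocation for the other. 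Thus a mechanism defined only on normalized vectors extends to all nonzero vectors by rescaling the input, running it, and keeping the same allocation; the payments scale by $c$, which is exactly what is required for the scaled instance to remain truthful (the VCG-style payment terms are all linear in the multipliers). Composing this with the sorting step from the previous paragraph yields the full reduction.

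The main obstacle, such as it is, is not any computation but rather making precise the sense in which the reduction is ``without loss of generality'' in a \emph{mechanism-design} setting rather than a pure-optimization one: one must check that both transformations (permuting and scaling the reported bid vector) are applied as preprocessing that depends only on the reports, so that an agent contemplating a deviation cannot escape or exploit the normalization, and that the induced payments still implement a truthful mechanism. Since both maps are bijections on the relevant domains and commute appropriately with $f$, this is routine, and the observation follows.
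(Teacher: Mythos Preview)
Your proposal is correct and follows exactly the reasoning the paper has in mind: the paper does not give a separate proof of Observation~\ref{assum1} at all, but simply states that it follows from the symmetry and scale-freeness properties established immediately before it. Your write-up spells out the permutation argument for sorting and the rescaling argument for the unit-norm normalization that the paper leaves implicit, and additionally addresses the degenerate $\mathbf{v}=\mathbf{0}$ case and the preservation of truthfulness under both transformations --- details the paper omits entirely --- but the underlying idea is the same.
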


Given a multiplier vector $\mathbf{v}$, let $\mathrm{A}(\mathbf{v})$ be the optimal allocation for $\mathbf{v}$ and $\mathrm{OPT}(\mathbf{v}) = f(\mathbf{v}, \mathrm{A}(\mathbf{v}))$. Moreover, if $f(\mathbf{v}, \mathbf{S})\ \geq\ \alpha\cdot\mathrm{OPT}(\mathbf{v})$ for some $\alpha \leq 1$ then the allocation $\mathbf{S}$ is said to be $\alpha$-optimal or $\alpha$-approximate for $\mathbf{v}$.

We note a simple property of $\mathrm{A}(\mathbf{v})$: Let $\mathbf{v}$ be a multiplier vector with $v_1 \geq v_2 \geq ... \geq v_n$. Let $\mathbf{S}$ be any allocation. If $\mathbf{T}$ is a permutation of $\mathbf{S}$ such that $f(T_1) \geq f(T_2) \geq ... \geq f(T_n)$, then $f(\mathbf{v}, \mathbf{T}) \geq f(\mathbf{v}, \mathbf{S})$. In particular, if $\mathbf{S} = \mathrm{A}(\mathbf{v})$ then $f(S_1) \geq f(S_2) \geq ... \geq f(S_n)$.

Finally, we assume the existence of a poly-time black-box algorithm that computes an $\alpha$-approximate allocation $\mathrm{B}(\mathbf{v})$ for the multiplier vector $\mathbf{v}$. We express the performance guarantees of our truthful mechanisms in terms of $\alpha$ and other parameters of the problem. Although the output allocation $\mathbf{S}$ of such an algorithm may not obey $f(S_1) \geq f(S_2) \geq ... \geq f(S_n)$, it is easy to construct a non-decreasing permutation of $\mathbf{S}$ which only improves the objective function value, as discussed above. 

\begin{observation}
\label{assum2}
\emph{Without loss of generality}, any allocation $\mathbf{S}$ output by the black-box algorithm obeys $f(S_1) \geq f(S_2) \geq ... \geq f(S_n)$.
\end{observation}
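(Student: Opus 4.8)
The plan is to show that the ordering condition can always be enforced by a cheap, deterministic post-processing step on the black-box output, so that assuming it costs nothing. Let $\mathbf{S} = \mathrm{B}(\mathbf{v})$ be the allocation returned by the $\alpha$-approximation algorithm on a multiplier vector $\mathbf{v}$, which by Observation~\ref{assum1} we may assume has non-increasing entries $v_1 \geq v_2 \geq \dots \geq v_n$. By the definition of the black box, $f(\mathbf{v}, \mathbf{S}) \geq \alpha \cdot \mathrm{OPT}(\mathbf{v})$.

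First I would apply the rearrangement property stated just above the observation: letting $\mathbf{T}$ be the permutation of $\mathbf{S}$ with $f(T_1) \geq f(T_2) \geq \dots \geq f(T_n)$, we get $f(\mathbf{v}, \mathbf{T}) \geq f(\mathbf{v}, \mathbf{S})$. (The underlying argument is a one-step exchange: if $v_i \geq v_j$ with $i < j$ but $f(S_i) < f(S_j)$, swapping the two sets changes the welfare by $(v_i - v_j)\bigl(f(S_j) - f(S_i)\bigr) \geq 0$; iterating sorts the allocation while never decreasing $f(\mathbf{v}, \cdot)$.) Hence $f(\mathbf{v}, \mathbf{T}) \geq \alpha \cdot \mathrm{OPT}(\mathbf{v})$, so $\mathbf{T}$ is itself $\alpha$-approximate for $\mathbf{v}$ and already satisfies the claimed ordering.

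Next I would check that the map $\mathbf{S} \mapsto \mathbf{T}$ runs in polynomial time: it requires only the $n$ values $f(S_1), \dots, f(S_n)$ — obtainable from whatever representation of $f$ (explicit, value oracle, or demand oracle) the black box already uses — followed by a sort of $n$ numbers. Composing the black box with this step gives a new algorithm that always outputs an allocation with $f(S_1) \geq \dots \geq f(S_n)$, runs in polynomial time, and has the same approximation ratio. Because the post-processing is deterministic, it preserves both the truthfulness type and the mode of the approximation guarantee (deterministic, with high probability, or in expectation) inherited from the black box; so we may assume Observation~\ref{assum2} without loss of generality.

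There is essentially no obstacle here: the only piece of content is the exchange/rearrangement step, which the paper has already asserted, and the remainder is bookkeeping to confirm that wrapping the black box in a sort preserves running time, truthfulness, and the stochastic nature of the guarantee.
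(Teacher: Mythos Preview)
Your proposal is correct and matches the paper's own justification: the paper simply notes (in the paragraph immediately preceding the observation) that one can permute the black-box output into non-increasing order of $f(S_i)$, which only improves the objective by the rearrangement property already stated. Your write-up just spells out the polynomial-time and guarantee-preservation bookkeeping that the paper leaves implicit.
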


Henceforth, we enforce assumptions from Observation \ref{assum1} and \ref{assum2}.

\begin{definition}[$\mathbf{u}$ dominates $\mathbf{w}$]
We say that a multiplier vector $\mathbf{u}$ dominates $\mathbf{w}$ if there exists an index $i$ such that for $k < i$, $u_k \geq w_k$ and for $k \geq i$, $u_k \leq w_k$.
\end{definition}

\begin{lemma}
\label{lemma6}
If $\mathbf{u}$ dominates $\mathbf{w}$, then $f(\mathbf{u}, \mathbf{S}) \geq f(\mathbf{w}, \mathbf{S})$ for any allocation $\mathbf{S}$ satisfying $f(S_1) \geq f(S_2) \geq ... \geq f(S_n)$.
\end{lemma}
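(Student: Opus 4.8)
The plan is to reduce to a telescoping/Abel-summation argument using the fact that $\mathbf{u}$ dominates $\mathbf{w}$ at some threshold index $i$. Write $a_k = f(S_k)$ for brevity, so that $a_1 \geq a_2 \geq \dots \geq a_n \geq$ (whatever sign $f$ takes — note we are not assuming $f \geq 0$, but we do have the ordering). We want to show $\sum_k u_k a_k \geq \sum_k w_k a_k$, i.e. $\sum_k (u_k - w_k) a_k \geq 0$. Set $d_k = u_k - w_k$. By the domination hypothesis there is an index $i$ such that $d_k \geq 0$ for $k < i$ and $d_k \leq 0$ for $k \geq i$.

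The key observation is that both $\mathbf{u}$ and $\mathbf{w}$ satisfy the normalization from Observation \ref{assum1}, so $\sum_k u_k = \sum_k w_k = 1$, which gives $\sum_k d_k = 0$. First I would use this to rewrite the target sum by subtracting the constant $a_{i}$ (the "pivot" value at the threshold) times $\sum_k d_k = 0$:
\[
\sum_{k} d_k a_k \;=\; \sum_{k} d_k (a_k - a_{i}).
\]
Now split the sum at $i$. For $k < i$ we have $d_k \geq 0$ and, since the $a_k$'s are non-increasing, $a_k - a_{i} \geq 0$, so each term is non-negative. For $k \geq i$ we have $d_k \leq 0$ and $a_k - a_{i} \leq 0$, so again each term is non-negative (product of two non-positive numbers). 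Hence the whole sum is $\geq 0$, which is exactly $f(\mathbf{u},\mathbf{S}) \geq f(\mathbf{w},\mathbf{S})$.

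The main thing to be careful about — rather than a genuine obstacle — is making sure the normalization $\sum_k u_k = \sum_k w_k$ is legitimately available; this is exactly the content of Observation \ref{assum1}, which we have been told to enforce henceforth, so there is no issue. One should also double-check the boundary case where $i = 1$ or $i = n+1$ (i.e. all $d_k \leq 0$ or all $d_k \geq 0$): combined with $\sum_k d_k = 0$ this forces $d_k = 0$ for all $k$ and the inequality holds trivially. No assumption on the sign of $f(S_k)$ is needed, only the monotone ordering guaranteed for $\mathbf{S}$, which is precisely the hypothesis of the lemma.
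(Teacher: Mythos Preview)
Your proof is correct and is essentially the same argument as the paper's: both use the normalization $\sum_k u_k = \sum_k w_k$ from Observation~\ref{assum1} together with the sign pattern of $d_k = u_k - w_k$ around the threshold $i$ and the monotonicity of $f(S_k)$. The only cosmetic difference is that you subtract the pivot value $a_i$ explicitly to make every term in $\sum_k d_k(a_k - a_i)$ non-negative, whereas the paper splits into $x_k = d_k$ for $k<i$ and $y_k = -d_k$ for $k\geq i$, notes $\sum x_k = \sum y_k$, and bounds each half against a common value; these are the same computation.
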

\begin{proof}
For $k < i$, let $x_k = u_k - w_k$. Similarly for $k \geq i$, $y_k = w_k-u_k$. Then $$\sum_{k = 1}^{i-1}{x_k} - \sum_{k=i}^n{y_k}\ =\ \sum_{k=1}^n{u_k}-\sum_{k=1}^n{w_k}\ =\ 0$$ which means $\sum_{k = 1}^{i-1}{x_k} = \sum_{k=i}^n{y_k}$. Since $f(S_{k_1}) \geq f(S_{k_2})$ whenever $k_1 < i \leq k_2$,\\
$$f(\mathbf{u}, \mathbf{S}) - f(\mathbf{w}, \mathbf{S}) \ = \ \displaystyle\sum_{k=1}^{i-1}{x_kf(S_k)} - \displaystyle\sum_{k=i}^n{y_kf(S_k)} \ \geq \ 0$$\qed
\end{proof}
\noindent\textbf{Staircase Representation}: Suppose we represent a multiplier vector $\mathbf{v}$ as a histogram, which consists of $n$ vertical bars corresponding to $v_1, ..., v_n$, in that order from left to right. Since multiplier vectors have non-increasing components, such a histogram looks like a staircase descending from left to right (Refer to Figure \ref{fig4} for an example). We will refer to it as the \emph{staircase representation} of $\mathbf{v}$ and use it mainly as a visual tool.

\begin{figure}[h]
\centering
\includegraphics[height=3cm]{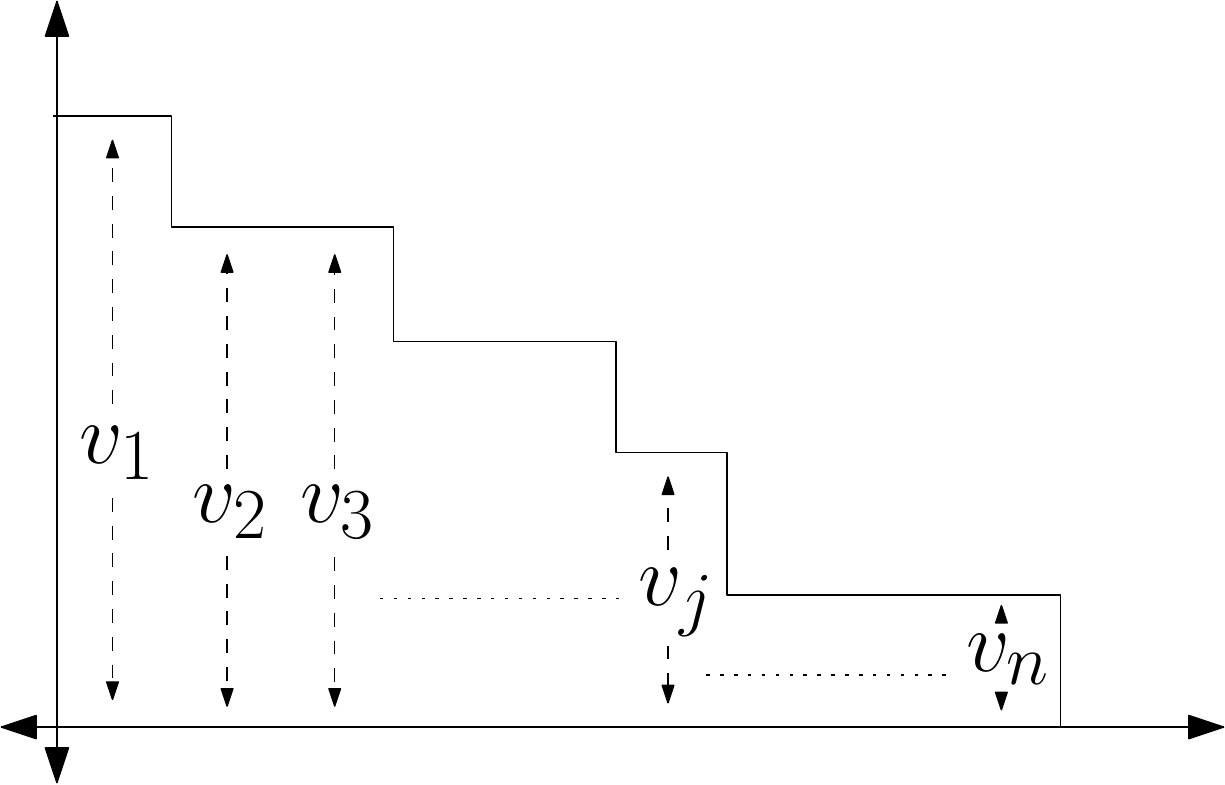}
\caption{The staircase representation of $\mathbf{v} = (v_1, ..., v_n)$.\label{fig4}}
\end{figure}

\section{Vector-Fitting Mechanisms}
\label{vfit}
Consider the following candidate approach to single parameter combinatorial auctions with partially public valuations: Fix a set $\mathcal{U}$ of some multiplier vectors. Using the black-box algorithm, compute an $\alpha$-approximate allocation $\mathrm{B}(\mathbf{v})$ for each vector $\mathbf{v} \in \mathcal{U}$ and populate the range $\mathcal{R} = \{\ B(\mathbf{v})\ :\ \mathbf{v} \in \mathcal{U}\ \}$. Run the maximal-in-range mechanism which given a multiplier vector $\mathbf{v}$, chooses the allocation $\mathbf{S} \in \mathcal{R}$ that maximizes $f(\mathbf{v}, \mathbf{S})$.

Let's consider the merits and demerits of this mechanism. If the input multiplier vector happens to be in $\mathcal{U}$, then the mechanism will indeed return an output allocation that is at least $\alpha$-approximate. But we have no guarantees otherwise. If $\mathcal{U}$ consisted of all possible vectors, we would have an $\alpha$-approximate truthful mechanism that could be computationally infeasible due to the size of $\mathcal{U}$. We handle this trade-off with \emph{vector-fitting}. The intuition behind vector-fitting is as follows: If two multiplier vectors $\mathbf{u}$ and $\mathbf{v}$ are `very similar' to each other, then $\mathrm{B}(\mathbf{u})$ and $\mathrm{B}(\mathbf{v})$ should be `similar' as well. In particular, $\mathrm{B}(\mathbf{u})$ should be a reasonably good allocation for $\mathbf{v}$ and vice versa. 

Our mechanism will be the same as the candidate mechanism outlined above, except that we will construct the set of vectors $\mathcal{U}$ very carefully. For any input vector of multipliers $\mathbf{v}$, we will guarantee that a reasonably similar vector $\mathbf{v}'$ can be found in $\mathcal{U}$, and hence and allocation $\mathbf{S}'$ is in the range $\mathcal{R}$ with provably large objective value $f(\mathbf{v}, \mathbf{S}')$.

\subsection{A Simple $\frac{\alpha}{\ln{n}}$-factor Mechanism}
\label{simple}

In this section, we will conduct a warm-up exercise by applying the vector-fitting method to construct a simple $\frac{\alpha}{\ln{n}}$-factor truthful mechanism. Recall that the vector-fitting method as outlined in Section \ref{vfit} starts with a set $\mathcal{U}$ of multiplier vectors. Our set $\mathcal{U}$ is defined as $\mathcal{U}\ =\ \{\ \mathbf{u}^j\ :\ 1 \leq j \leq n\ \}$ where $\mathbf{u}^j$ is defined as follows: 
$$u^j_i\ =\ \displaystyle\frac{1}{j}\ \ \mbox{for $1 \leq i \leq j$, zero elsewhere}$$ 

As before, for each $\mathbf{v} \in \mathcal{U}$, we compute an $\alpha$-approximate allocation $\mathrm{B}(\mathbf{v})$ and populate the range $\mathcal{R}$ with it.

Let $\mathbf{v}$ be the input multiplier vector. Let $r_j = \sum_{k=1}^j{v_j}$ for $1 \leq j \leq n$ be the prefix sums of $\mathbf{v}$. We define prefix vectors $\mathbf{d}^j$ of $\mathbf{v}$ as:
$$d^j_i\ =\ \displaystyle\frac{v_i}{r_j}\ \ \mbox{for $1 \leq i \leq j$, zero elsewhere}$$ 
It is easy to verify that $\mathbf{d}^j$ is a valid multiplier vector \textit{i.e.}, it has non-increasing components and unit $l_1$ norm.

\begin{figure}[h]
\centering
\includegraphics[height=4cm]{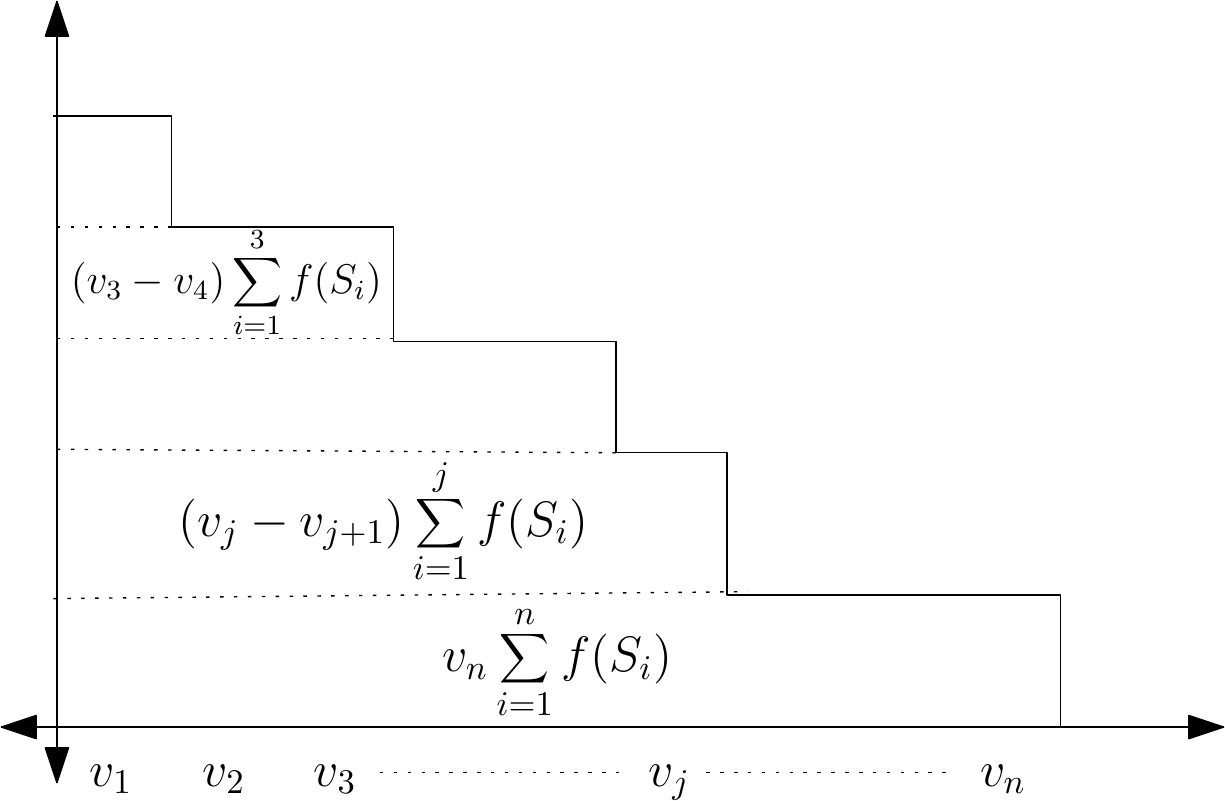}
\caption{Expressing $f(\mathbf{v}, \mathbf{S})$ as horizontal cuts of the staircase.\label{fig3}}
\end{figure}

Let $\mathbf{S} = \mathrm{A}(\mathbf{v})$ be the optimal allocation for $\mathbf{v}$ and $\mathbf{T}$ be the allocation returned by our mechanism. For notational convenience, define $v_{n+1} = 0$. We start with $\mathrm{OPT}(\mathbf{v}) = f(\mathbf{v}, \mathbf{S})$ and look at how horizontal sections under the staircase of $\mathbf{v}$ contribute to it.  See figure \ref{fig3}.

\begin{eqnarray}
\nonumber \mathrm{OPT}(\mathbf{v}) & = & \sum_{i}^n{v_if(S_i)}\\
\label{eq7} & = & v_n\sum_{i=1}^n{f(S_i)}\ +\ (v_{n-1}-v_n)\sum_{i=1}^{n-1}{f(S_i)}\ +\ ...\ +\ (v_1-v_2)f(S_1)\\
\nonumber & = &  \sum_{j = 1}^n{\left[j(v_j-v_{j+1})\sum_{i=1}^j{\frac{f(S_i)}{j}}\right]}\\
\nonumber & = & \sum_{j = 1}^n{\left[j(v_j-v_{j+1})\cdot f(\mathbf{u}^j, \mathbf{S})\right]}\\
\nonumber & \leq & \sum_{j = 1}^n{\left[j(v_j-v_{j+1})\cdot \mathrm{OPT}(\mathbf{u}^j)\right]}\\
\nonumber & \leq & \sum_{j = 1}^n{\left[\alpha^{-1} j(v_j-v_{j+1})\cdot f(\mathbf{u}^j,\mathrm{B}(\mathbf{u}^j))\right]}\\
\label{eq3} & \leq & \sum_{j = 1}^n{\left[\alpha^{-1} j(v_j-v_{j+1})\cdot f(\mathbf{d}^j,\mathrm{B}(\mathbf{u}^j))\right]}\\
\label{eq4} & \leq & \sum_{j = 1}^n{\left[\frac{j(v_j-v_{j+1})}{\alpha r_j}\cdot f(\mathbf{v},\mathrm{B}(\mathbf{u}^j))\right]}\\
\label{eq5} & \leq &  \sum_{j = 1}^n{\left[\frac{j(v_j-v_{j+1})}{\alpha r_j}\cdot f(\mathbf{v}, \mathbf{T})\right]}\\
\label{eq6} & = & \frac{f(\mathbf{v}, \mathbf{T})}{\alpha}\left[1+\sum_{j=2}^n{\frac{v_j(r_j-jv_j)}{r_jr_{j-1}}}\right]
\end{eqnarray}

Equation \eqref{eq7} decomposes $f(\mathbf{v}, \mathbf{S})$ as the horizontal cuts of the staircase of $\mathbf{v}$ (See Figure \ref{fig3}). \eqref{eq3} follows from the previous step by applying a simple structural property formalized by Lemma \ref{lemma6} to $\mathbf{d}^j$ and $\mathbf{u}^j$. Equation \eqref{eq4} follows from \eqref{eq8} below, which is a simple restatement. 

\begin{equation}
\label{eq8}
\sum_{i=1}^j{\frac{v_i}{r_j}\cdot f(S_i)}\ \leq\ \frac{1}{r_j}\sum_{i=1}^n{v_if(S_i)}\ =\ \frac{f(\mathbf{v}, \mathbf{S})}{r_j}
\end{equation}

Equation \eqref{eq6} is derived from the previous expression by simply rearranging the terms. Since $v_j \leq r_j/j$ and $r_j-jv_j \leq r_{j-1}$, we conclude that $$1+\sum_{j=2}^n{\frac{v_j(r_j-jv_j)}{r_jr_{j-1}}}\ \leq\ \sum_{j=1}^n\frac{1}{j}\ \leq\ \ln{n}$$ and $$f(\mathbf{v}, \mathbf{T}) \geq \alpha\cdot \mathrm{OPT}(\mathbf{v})/\ln{n}$$

\textbf{An example that achieves the bound}: We can differentiate each term of the summation in equation \eqref{eq6} to compute the values of $v_j$ for which the term is maximized, so as to make the bound as loose as possible. Surprisingly, a single multiplier vector maximizes all the terms simultaneously! This vector is defined as $v_j = (\sqrt{j} - \sqrt{j-1})/\sqrt{n}$. Some calculations prove that for this multiplier vector, the summation is indeed $\Omega(\ln{n})$.

\section{The Main Result}
\label{main}
In this section, we will use vector-fitting to obtain a general technique to convert a non-truthful approximation algorithm for single parameter combinatorial auctions into a truthful mechanism. This technique yields a range of trade-offs between the approximation factor and the running time of the algorithm. We will prove the following theorem:

\begin{theorem}
\label{theorem1}
There exists a truthful mechanism for maximizing welfare in a single parameter combinatorial auction with partially public valuations that runs in time $O((\log_a{n})^{\log_b{n}}\cdot \mathrm{poly}(m, n))$ and produces an allocation with total welfare at least $\frac{3\alpha}{4ab}\cdot \mathrm{OPT}(\mathbf{v})$ - where $\alpha$ is the approximation factor of the black-box optimization algorithm and $a,b > 1$ are parameters of the mechanism.
\end{theorem}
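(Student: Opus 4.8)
The plan is to re-use the vector-fitting framework of Section~\ref{vfit} but with a far richer family $\mathcal{U}$ of multiplier vectors than the $n$ flat vectors of Section~\ref{simple} --- rich enough that a \emph{single} member of $\mathcal{U}$, rather than a logarithmic-size combination of them, already captures a constant fraction of $\mathrm{OPT}(\mathbf{v})$. The mechanism is then the generic maximal-in-range mechanism over $\mathcal{R}=\{\,\mathrm{B}(\mathbf{w}):\mathbf{w}\in\mathcal{U}\,\}$, so its truthfulness (deterministic or universal, depending on whether $\mathrm{B}$ is) follows from VCG payments because $\mathcal{R}$ is fixed before $\mathbf{v}$ is seen, and its running time is $|\mathcal{U}|$ black-box calls plus $|\mathcal{U}|$ evaluations of $f(\mathbf{v},\cdot)$, i.e.\ $O(|\mathcal{U}|\cdot\mathrm{poly}(m,n))$. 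Thus the proof reduces to: (a) building $\mathcal{U}$ with $|\mathcal{U}|=O\big((\log_a n)^{\log_b n}\big)$, and (b) showing that for every $\mathbf{v}$ some $\mathbf{w}\in\mathcal{U}$ satisfies $f(\mathbf{v},\mathrm{B}(\mathbf{w}))\ge\frac{3\alpha}{4ab}\,\mathrm{OPT}(\mathbf{v})$.

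The key observation for (b) is that, by Abel summation against a non-increasing non-negative sequence $f(S_1)\ge\cdots\ge f(S_n)$, the welfare $f(\mathbf{w},\mathbf{S})$ depends on a \emph{unit} vector $\mathbf{w}$ only through its prefix sums $R_j(\mathbf{w})=\sum_{k\le j}w_k$: a quantitative refinement of Lemma~\ref{lemma6} shows that if $R_j(\mathbf{w})\le R_j(\mathbf{v})$ for all $j$ then $f(\mathbf{v},\mathbf{S})\ge f(\mathbf{w},\mathbf{S})$, and if $R_j(\mathbf{w})\ge c\,R_j(\mathbf{v})$ for all $j$ (with $c\le 1$) then $f(\mathbf{w},\mathbf{S})\ge c\,f(\mathbf{v},\mathbf{S})$, both for every allocation $\mathbf{S}$ with $f(S_1)\ge\cdots\ge f(S_n)\ge 0$ (non-negativity of the allocated sets is already used implicitly in~\eqref{eq8}). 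Hence it suffices that $\mathcal{U}$ contain, for every $\mathbf{v}$, a vector $\mathbf{w}$ whose prefix-sum curve two-sidedly sandwiches that of $\mathbf{v}$ within a factor $ab$. I therefore let $\mathcal{U}$ consist of all valid multiplier vectors whose prefix-sum sequence, sampled at the geometric index grid $\{1,\lceil b\rceil,\lceil b^2\rceil,\dots,n\}$ (of size $\Theta(\log_b n)$), takes values on the geometric grid $\{1,a^{-1},a^{-2},\dots\}$ down to roughly $1/n$ (of size $\Theta(\log_a n)$) and equals $1$ at $j=n$, with $\mathbf{w}$ reconstructed from these samples as the upper concave envelope (which, by differencing, is automatically a non-increasing unit vector). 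Such a $\mathbf{w}$ is specified by one grid value per sample point, so $|\mathcal{U}|\le(\lceil\log_a n\rceil+1)^{\Theta(\log_b n)}$, as required.

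For (b), given $\mathbf{v}$ I round $R_g(\mathbf{v})$ \emph{down} to the value grid at each index-grid point $g$ and take the corresponding $\mathbf{w}\in\mathcal{U}$. Since the rounded samples lie below the concave sequence $R_j(\mathbf{v})$, so does their upper concave envelope, giving $R_j(\mathbf{w})\le R_j(\mathbf{v})$ for all $j$; in the other direction, rounding loses at most a factor $a$ at each grid point, and concavity of $R_j(\mathbf{v})$ together with the ratio-$b$ spacing of the index grid bounds $R_j(\mathbf{v})$ by $b$ times its value at the preceding grid point, yielding $R_j(\mathbf{w})\ge\frac{1}{ab}R_j(\mathbf{v})$ for all $j$ (the ends $j=1$ and $j=n$ get a short separate check using $v_1\ge 1/n$ and $R_n(\mathbf{v})=1$). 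Feeding the two sandwich bounds into the prefix-sum principle --- once with $\mathbf{S}=\mathrm{B}(\mathbf{w})$ and once with $\mathbf{S}=\mathrm{A}(\mathbf{v})$ --- gives $f(\mathbf{v},\mathrm{B}(\mathbf{w}))\ge f(\mathbf{w},\mathrm{B}(\mathbf{w}))\ge \alpha\,\mathrm{OPT}(\mathbf{w})\ge \alpha\,f(\mathbf{w},\mathrm{A}(\mathbf{v}))\ge \tfrac{\alpha}{ab}\,\mathrm{OPT}(\mathbf{v})$, and since the mechanism returns the best allocation in $\mathcal{R}$ its welfare is at least this.

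The place where I expect the genuine work --- and where the precise constant $\tfrac34$ comes from --- is the seam between the discretization and the two-sided sandwich: ensuring the concave-envelope reconstruction is a bona fide monotone unit multiplier vector while keeping both $R_j(\mathbf{w})\le R_j(\mathbf{v})$ and $R_j(\mathbf{w})\ge\frac{1}{ab}R_j(\mathbf{v})$ valid across the coarse first block $\{1,\dots,\lceil b\rceil\}$, the last partial block ending at $n$, and the floor of the value grid near $1/n$, where the clean factor $ab$ degrades and must be re-argued. Everything else is routine geometric-grid bookkeeping. Crucially, in contrast to Section~\ref{simple}, no step costs a $\log n$ factor, precisely because one element of $\mathcal{U}$ does the whole job; $\log_b n$ and $\log_a n$ surface only in $|\mathcal{U}|$, i.e.\ the running time. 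Taking $(a,b)=(2,\log n)$ then gives the polynomial-time $\Omega(\alpha/\log n)$ mechanism and $(a,b)=(2,2)$ the $n^{O(\log\log n)}$-time $\Omega(\alpha)$ mechanism stated in the introduction.
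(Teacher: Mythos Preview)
Your high-level plan matches the paper's exactly --- a maximal-in-range mechanism over $\mathcal{R}=\{\mathrm{B}(\mathbf{w}):\mathbf{w}\in\mathcal{U}\}$ with $|\mathcal{U}|=O\bigl((\log_a n)^{\log_b n}\bigr)$, together with the chain $f(\mathbf{v},\mathrm{B}(\mathbf{w}))\ge f(\mathbf{w},\mathrm{B}(\mathbf{w}))\ge\alpha\,\mathrm{OPT}(\mathbf{w})\ge\alpha\,f(\mathbf{w},\mathrm{A}(\mathbf{v}))\ge\Omega(\tfrac{\alpha}{ab})\,\mathrm{OPT}(\mathbf{v})$ for a single well-chosen $\mathbf{w}\in\mathcal{U}$ --- but the construction of $\mathcal{U}$ and the comparison tool are genuinely different. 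The paper discretizes the \emph{staircase} of $\mathbf{v}$ in two passes: a ``floor'' rounds each height $v_i$ down to a power $b^{-k}$ (Lemma~\ref{lemma1}, loss $\tfrac{3}{4b}$), and then a ``core'' shrinks each step's width to a value $\lceil a^k\rceil$ (Lemma~\ref{lemma2}, loss $\tfrac1a$); the inequality $f(\mathbf{v},\cdot)\ge f(\mathbf{w},\cdot)$ is obtained via the domination relation of Lemma~\ref{lemma6}. You instead discretize the \emph{prefix-sum curve} $j\mapsto R_j(\mathbf{v})$ on a ratio-$b$ index grid and a ratio-$a$ value grid, recovering $\mathbf{w}$ as the concave envelope, and you replace Lemma~\ref{lemma6} by the majorization principle (which strictly generalizes it: the paper's domination implies $R_j(\mathbf{u})\ge R_j(\mathbf{w})$ for all $j$, as one checks by summing from either end). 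Both discretizations yield the same count because each specifies one of $O(\log_a n)$ choices at each of $O(\log_b n)$ levels; your version is analytically cleaner via Abel summation, while the paper's height/width picture stays closer to the staircase intuition.

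One correction to your expectation about the constant: the factor $\tfrac34$ in the statement is \emph{not} a boundary artifact that will emerge from your edge-case analysis near $j=1$, $j=n$, or the value-grid floor. It is specifically the minimum of $H+(1-H)^2$ over $H\in[0,1]$, which appears in the proof of Lemma~\ref{lemma1} when the tail of $\mathbf{v}$ drops below the smallest height $q\in Q$. Your concave-envelope rounding handles that tail by a different mechanism and will produce a different absolute constant. So your argument, once the seams are patched, proves the theorem with some $c/(ab)$ in place of $\tfrac{3}{4ab}$ --- sufficient for Corollaries~\ref{corr1} and~\ref{corr2}, but not the stated constant verbatim; for that you would need the paper's floor/core rounding rather than the prefix-sum one.
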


Setting $a = b = 2$ we get: (Henceforth, all logarithms are to base 2)

\begin{corollary}
\label{corr1}
There exists a $\frac{3\alpha}{16}$-factor truthful mechanism running time\\ $O((n^{\log{\log{n}}}\cdot \mathrm{poly}(m,n))$, \textit{i.e.} quasi-polynomial time.
\end{corollary}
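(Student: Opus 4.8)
The plan is to instantiate the vector-fitting scheme of Section~\ref{vfit} with a range built from ``doubly geometric'' staircases. Fix $L = \lceil\log_b n\rceil + O(1)$ admissible heights $\{1, b^{-1}, b^{-2}, \dots, b^{-(L-1)}\}$ and the admissible widths $\{1, a, a^2, \dots, a^{K}\}\cup\{n\}$ with $K = \lfloor\log_a n\rfloor$, and let $\mathcal{U}$ be the set of all non-increasing vectors whose distinct nonzero entries lie in the height grid and whose jump widths $W_\ell = |\{i : w_i \ge b^{-\ell}\}|$ all lie in the width set. Such a vector is determined by the non-decreasing tuple $(W_0,\dots,W_{L-1})$, so $|\mathcal{U}| = O((\log_a n)^{\log_b n}\cdot\mathrm{polylog}(n))$, the polylog factor being absorbed into $\mathrm{poly}(m,n)$. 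Compute $\mathrm{B}(\mathbf{u})$ once for each $\mathbf{u}\in\mathcal{U}$, store $\mathcal{R} = \{\mathrm{B}(\mathbf{u}) : \mathbf{u}\in\mathcal{U}\}$, and on input $\mathbf{v}$ output $\arg\max_{\mathbf{S}\in\mathcal{R}} f(\mathbf{v},\mathbf{S})$. This is maximal-in-range, hence truthful (universally so when $\mathrm{B}$ is randomized, since every fixing of $\mathrm{B}$'s coins freezes $\mathcal{R}$ and yields a deterministic maximal-in-range mechanism), and it runs within the stated time bound.

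For the approximation, fix $\mathbf{v}$ and let $\mathbf{S} = \mathrm{A}(\mathbf{v})$, so $f(S_1)\ge\cdots\ge f(S_n)$. First I would discard the tail: set $k^{*} = \max\{i : v_i \ge v_1/(4n)\}$. Then $\sum_{i>k^{*}} v_i f(S_i) \le v_{k^{*}+1}\sum_{i=1}^n f(S_i) \le v_{k^{*}+1}\, n\, f(S_1) < \tfrac14 v_1 f(S_1) \le \tfrac14\,\mathrm{OPT}(\mathbf{v})$, so the first $k^{*}$ terms already account for at least $\tfrac34\,\mathrm{OPT}(\mathbf{v})$. Next, round the surviving part of $\mathbf{v}$ \emph{upward} — first replace each $v_i$ ($i\le k^{*}$) by the least power of $b$ that is $\ge v_i$ (and put $0$ for $i>k^{*}$), then replace each resulting width $W_\ell$ by the least admissible width $\ge W_\ell$ — and rescale so the top entry is $1$; call the result $\mathbf{u}\in\mathcal{U}$ and let $\mathbf{w} = v_1\mathbf{u}$ be the un-normalized rounded vector. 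The construction must deliver two facts: $w_i \ge v_i$ for every $i\le k^{*}$ (both rounding steps only raise coordinates), and $w_i < b\,v_{\lceil i/a\rceil}$ for every $i$ (the height rounding costs a factor $<b$ vertically, and the width rounding, via $\hat{W}_\ell < aW_\ell$ and monotonicity of $\ell\mapsto W_\ell$, shifts coordinate $i$ left by at most a factor $a$).

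These two facts chain as follows. Since $w_i\ge v_i$ on $[1,k^{*}]$, scale-freeness gives $\mathrm{OPT}(\mathbf{w}) \ge f(\mathbf{w},\mathbf{S}) \ge \sum_{i\le k^{*}} v_i f(S_i) \ge \tfrac34\,\mathrm{OPT}(\mathbf{v})$, hence for $\mathbf{T} = \mathrm{B}(\mathbf{u})$ we get $f(\mathbf{w},\mathbf{T}) = v_1 f(\mathbf{u},\mathbf{T}) \ge \alpha\,v_1\,\mathrm{OPT}(\mathbf{u}) = \alpha\,\mathrm{OPT}(\mathbf{w}) \ge \tfrac{3\alpha}{4}\,\mathrm{OPT}(\mathbf{v})$. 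On the other side, $\mathbf{T}$ is sorted by Observation~\ref{assum2}, so $f(T_i) \le f(T_{\lceil i/a\rceil})$, and therefore
\[
  f(\mathbf{w},\mathbf{T}) = \sum_i w_i f(T_i) \;<\; b\sum_i v_{\lceil i/a\rceil} f(T_i) \;\le\; b\sum_i v_{\lceil i/a\rceil} f(T_{\lceil i/a\rceil}) \;\le\; ab\sum_j v_j f(T_j) \;=\; ab\,f(\mathbf{v},\mathbf{T}),
\]
because each index $j$ equals $\lceil i/a\rceil$ for at most $a$ values of $i$. Combining, $f(\mathbf{v},\mathbf{T}) > \tfrac{1}{ab} f(\mathbf{w},\mathbf{T}) \ge \tfrac{3\alpha}{4ab}\,\mathrm{OPT}(\mathbf{v})$; since $\mathbf{T}\in\mathcal{R}$, the mechanism does at least this well. (Taking $a,b$ integral, rounding up if need be, removes minor fuss in the width count and in $\lceil i/a\rceil$.)

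The crux I expect is the joint design of $\mathcal{U}$ and the rounding so that a single $\mathbf{u}\in\mathcal{U}$ realizes the \emph{two-sided} sandwich — $v_i \le w_i$ on the kept prefix yet $w_i < b\,v_{\lceil i/a\rceil}$ everywhere — while keeping $|\mathcal{U}|$ inside the $(\log_a n)^{\log_b n}$ budget; within that, pinning down that the tail truncation costs only the factor $\tfrac34$ and that the composed height-then-width distortion is no worse than $b\,v_{\lceil i/a\rceil}$ is the technical heart.
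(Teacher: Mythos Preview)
Your proposal is essentially correct and reaches the same $\frac{3\alpha}{4ab}$ bound (hence $\frac{3\alpha}{16}$ at $a=b=2$), but it follows a somewhat different route from the paper. The paper rounds \emph{down}: it defines operators $\floor{\mathbf v}$ (height discretisation) and $\overleftarrow{\mathbf v}$ (width discretisation), both of which keep the vector unit-norm and ensure that $\mathbf v$ dominates $\floor{\mathbf v}$, which dominates $\overleftarrow{\floor{\mathbf v}}\in\mathcal U$. The transfer from the grid vector back to $\mathbf v$ is then a one-line application of the domination Lemma~\ref{lemma6}, and the $3/4$ appears inside the proof of Lemma~\ref{lemma1} from the treatment of components smaller than the grid threshold $q$. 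You instead round \emph{up}, drop the unit-norm constraint on $\mathcal U$, obtain a two-sided coordinatewise sandwich $v_i\le w_i<b\,v_{\lceil i/a\rceil}$, and extract the $3/4$ from an explicit tail truncation at $v_1/(4n)$. Your argument is more elementary (no domination lemma, no norm bookkeeping in \texttt{ConstructFloor}/\texttt{ConstructCore}); the paper's is cleaner once Lemma~\ref{lemma6} is in hand and avoids the separate tail step.

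One genuine technical slip to fix: with $\mathbf w = v_1\mathbf u$ as you define it, the claim $w_i\ge v_i$ can fail. If the height rounding sends $v_1$ strictly up to some $c=b^{-k_0}>v_1$, then rescaling $\mathbf u$ to top entry $1$ and multiplying back by $v_1$ (rather than by $c$) shrinks every coordinate by the factor $v_1/c<1$. Concretely, with $b=10$, $v_1=0.11$, $v_i=0.1$ one gets $c=1$, $u_i=0.1$, $w_i=0.011<v_i$. The fix is immediate and does not disturb the rest of the argument: either set $\mathbf w=c\,\mathbf u$ (the actual un-rescaled rounded vector), or equivalently round the ratios $v_i/v_1$ to the height grid from the start so that the top entry is already $1$ and no rescaling is needed. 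With that correction your chain $f(\mathbf w,\mathbf T)\ge\alpha\,\mathrm{OPT}(\mathbf w)\ge\tfrac{3\alpha}{4}\,\mathrm{OPT}(\mathbf v)$ and $f(\mathbf w,\mathbf T)\le ab\,f(\mathbf v,\mathbf T)$ goes through verbatim.
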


Similarly, setting $a = 2$ and $b = \log{n}$ we get:

\begin{corollary}
\label{corr2}
There exists a truthful mechanism with factor $\Omega\left(\frac{\alpha}{\log{n}}\right)$ and polynomial running time.
\end{corollary}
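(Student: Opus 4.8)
\textbf{Proof proposal for Corollary \ref{corr2}.}

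The plan is to invoke Theorem \ref{theorem1} with a judicious choice of the parameters $a$ and $b$ so that the running time collapses to a polynomial while the approximation factor degrades by only a logarithmic factor. Theorem \ref{theorem1} gives, for any $a, b > 1$, a truthful mechanism with welfare at least $\frac{3\alpha}{4ab}\cdot\mathrm{OPT}(\mathbf{v})$ and running time $O\bigl((\log_a n)^{\log_b n}\cdot\mathrm{poly}(m,n)\bigr)$. The idea is that taking $b$ itself to grow with $n$ forces the exponent $\log_b n$ down to a constant, which is exactly what is needed to tame the otherwise quasi-polynomial factor $(\log_a n)^{\log_b n}$.

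Concretely, I would set $a = 2$ and $b = \log n$ (base $2$, as the corollary stipulates). First, analyze the approximation factor: it becomes $\frac{3\alpha}{4ab} = \frac{3\alpha}{8\log n} = \Omega\!\left(\frac{\alpha}{\log n}\right)$, as claimed. Next, analyze the running time. With $a = 2$ we have $\log_a n = \log_2 n = \log n$. With $b = \log n$ we have $\log_b n = \frac{\log n}{\log\log n}$ (change of base), which is $o(\log n)$. Hence the combinatorial factor in the running time is
\[
(\log_a n)^{\log_b n} \;=\; (\log n)^{\frac{\log n}{\log\log n}}.
\]
Now take logarithms (base $2$): $\log\!\left[(\log n)^{\frac{\log n}{\log\log n}}\right] = \frac{\log n}{\log\log n}\cdot\log\log n = \log n$. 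Therefore $(\log n)^{\frac{\log n}{\log\log n}} = 2^{\log n} = n$, and the total running time is $O(n\cdot\mathrm{poly}(m,n)) = \mathrm{poly}(m,n)$, i.e. polynomial. Truthfulness is inherited directly from Theorem \ref{theorem1}, since that mechanism is truthful for every admissible choice of $a, b$, and $a = 2 > 1$, $b = \log n > 1$ are admissible for all $n \geq 3$ (the small cases $n \le 2$ being trivial, handled by brute force).

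The only genuinely delicate point — and the step I would write out carefully — is verifying that the exponent identity $(\log n)^{\log n/\log\log n} = n$ is exact (not merely asymptotic) under base-$2$ logarithms, and confirming that $b = \log n$ is a legal parameter setting throughout the proof of Theorem \ref{theorem1} (in particular that nothing in that proof secretly assumes $b$ is constant, e.g. in the claimed bound on $|\mathcal{U}|$ or in the recursive depth of the vector-fitting construction). Assuming Theorem \ref{theorem1} as stated holds verbatim for parameters depending on $n$, the corollary follows immediately from the two computations above. A remark worth including: the same template with $a = b = 2$ recovers Corollary \ref{corr1}, and more generally one can trade $b$ against the running-time exponent to interpolate between the polynomial-time $\Omega(\alpha/\log n)$ regime and the quasi-polynomial-time $\Omega(\alpha)$ regime.
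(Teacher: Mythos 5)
Your proposal is correct and is exactly the paper's route: the paper obtains Corollary \ref{corr2} by setting $a=2$ and $b=\log n$ in Theorem \ref{theorem1}, and your computations of the resulting factor $\frac{3\alpha}{8\log n}$ and of $(\log n)^{\log n/\log\log n}=n$ are the (unstated) details behind that one-line derivation. Nothing in the proof of Theorem \ref{theorem1} assumes $b$ is constant, so your caveat is satisfied and the argument goes through.
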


When the public valuation $f$ is submodular, we have $\alpha = \left(1 - \frac{1}{e}\right)$ and the above corollaries yield factors $\Omega(1)$ and $\Omega\left(\frac{1}{\log{n}}\right)$ respectively.

\subsection{Constructing the Range $\mathcal{R}$}

\noindent \textbf{Overview}: Recall the staircase representation of a multiplier vector $\mathbf{v}$, such as in Figure \ref{fig4}. Depending upon the entries of $\mathbf{v}$, the steps of the staircase may have varying heights. We can construct a discretization of the space of all multiplier vectors by restricting the values the height of any step can take. That is, we populate the initial set $\mathcal{U}$ with all vectors whose components take values of the form $b^{-k}$ for some constant $b > 1$ and for all $k \geq 0$. Now given any input vector $\mathbf{v}$, we can find a vector $\mathbf{u} \in \mathcal{U}$ such that $u_i$ is at most a multiplicative factor $b$ away from $v_i$. Thus, $\mathbf{u}$ can serve as a vector `similar' to $\mathbf{v}$. We need more complex machinery to ensure that the size of $\mathcal{U}$ does not blow up, and that the vectors in $\mathcal{U}$ still have unit norm.

Let $a,b > 1$ be suitably chosen parameters of the mechanism. Let $Q = \{\ b^{-k}\ :\ 0\leq k < \log_b{n}\ \}$ be a set of values discretizing the interval $(\frac{1}{n}, 1]$ and $q$ be the minimum element of $Q$. For a multiplier $v_i \geq q$, we define $\floor{v_i}$ to be the largest element of $Q$ that is no greater than $v_i$. For a multiplier vector $\mathbf{v}$ we define the floor of $\mathbf{v}$, $\floor{\mathbf{v}}$ as follows:

\begin{definition}[Floor $\floor{\mathbf{v}}$]
The floor $\floor{\mathbf{v}}$ of a multiplier vector $\mathbf{v}$ is the vector $\mathbf{u}$ constructed by Algorithm \ref{alg1}.
\end{definition}

\begin{algorithm}[H]
\caption{ConstructFloor\label{alg1}}
\SetKw{KwBreak}{break}
\For{$i\ =\ 1$ to $n$}
{
	$r\ \leftarrow\ \displaystyle\frac{\left(1-\sum_{k=1}^{i-1}u_k\right)}{(n-i+1)}$;

	\BlankLine
	\BlankLine
	
	\tcc{$r$ is the minimum permissible value of $u_i$ due to monotonicity.}
	
	\BlankLine
	\BlankLine

	\If{$v_i\ \geq\ q$ and $\floor{v_i}\ >\ r$}
	{
		$u_i\ \leftarrow\ \floor{v_i}$;
	}
	\Else
	{
		\For{$j\ =\ i$ to $n$}
		{
			$u_j\ \leftarrow\ r$;
		} 
		\KwBreak
	}
}
\end{algorithm}

In short, to find the `floor' of a multiplier vector, we successively round down the `large' components into elements of $Q$, until we need to set all the remaining components equal due the monotonicity and unit norm requirement or only `small' components are remaining. When represented as a staircase (Refer Figure \ref{fig1}), all the steps of $\floor{\mathbf{v}}$ except the last one must have height that belongs to $Q$.

\begin{figure}[h]
\centering
\includegraphics[height=4cm]{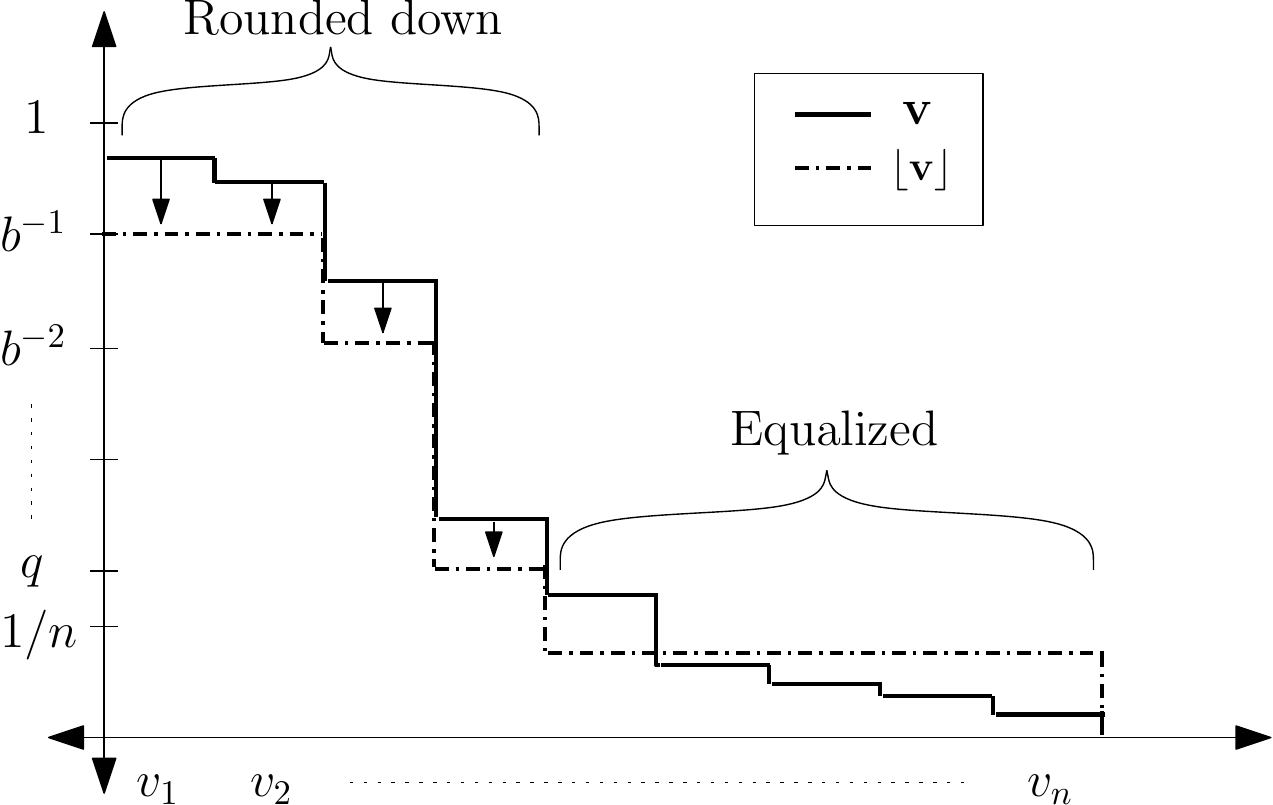}
\caption{Vertical fitting of $\mathbf{v}$.\label{fig1}}
\end{figure}

\begin{observation}
\label{obs5}
The floor of a vector $\mathbf{v}$ is a valid multiplier vector itself, \textit{i.e.} it has non-increasing components and unit $l_1$ norm. Moreover, $\mathbf{v}$ dominates $\floor{\mathbf{v}}$.
\end{observation}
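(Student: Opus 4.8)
The plan is to trace Algorithm~\ref{alg1} directly, exploiting the fact that it outputs a vector of a very rigid shape: a prefix of rounded-down entries lying in $Q$, followed by a suffix of mutually equal entries. First I would check that the algorithm is well defined, i.e. that the \textbf{else} branch (and hence the \texttt{break}) is reached at some iteration $t\le n$. Suppose instead that the \textbf{if} branch were taken at every iteration $1,\dots,n$, so that $u_k=\floor{v_k}$ for all $k$. Then at iteration $i=n$ the quantity $r$ equals $1-\sum_{k=1}^{n-1}\floor{v_k}\ge 1-\sum_{k=1}^{n-1}v_k=v_n\ge\floor{v_n}$ (using $\sum_k v_k=1$ from Observation~\ref{assum1} and $\floor{v_k}\le v_k$), contradicting the guard $\floor{v_n}>r$ required to enter the \textbf{if} branch. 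So let $t$ be the (unique) iteration at which the \textbf{else} branch fires; then every earlier iteration used the \textbf{if} branch, so $u_k=\floor{v_k}$ for $k<t$ and $u_k=r_t:=\bigl(1-\sum_{k=1}^{t-1}\floor{v_k}\bigr)/(n-t+1)$ for $k\ge t$.

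Unit $l_1$ norm is then immediate from the definition of $r_t$: $\sum_{k=1}^n u_k=\sum_{k=1}^{t-1}\floor{v_k}+(n-t+1)\,r_t=1$. All entries are non-negative, since $\floor{v_k}\ge q>0$ for $k<t$ (the \textbf{if} guard forces $v_k\ge q=\min Q$, so $\floor{v_k}$ is defined and at least $q$) and $r_t\ge 0$ because $\sum_{k=1}^{t-1}\floor{v_k}\le\sum_{k=1}^{t-1}v_k\le 1$.

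For the non-increasing property I would split into three cases. For two consecutive indices below $t$, $u_k=\floor{v_k}\ge\floor{v_{k+1}}=u_{k+1}$ because $v\mapsto\floor{v}$ is monotone and $v_k\ge v_{k+1}$. For indices at or above $t$ the entries all equal $r_t$. The only junction to handle is $k=t-1$ (when $t\ge 2$), where I must show $\floor{v_{t-1}}\ge r_t$; here I would use that the \textbf{if} branch fired at iteration $t-1$, i.e. $\floor{v_{t-1}}>r_{t-1}=\bigl(1-\sum_{k<t-1}\floor{v_k}\bigr)/(n-t+2)$. Rewriting this as $(n-t+2)\,\floor{v_{t-1}}>1-\sum_{k<t-1}\floor{v_k}$ and subtracting $\floor{v_{t-1}}$ from both sides gives $(n-t+1)\,\floor{v_{t-1}}>1-\sum_{k<t}\floor{v_k}$, i.e. $\floor{v_{t-1}}>r_t$, which is even stronger than required.

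Finally, for domination I would examine the sequence $e_k:=v_k-u_k$. For $k<t$ we have $e_k=v_k-\floor{v_k}\ge 0$, and for $k\ge t$ we have $e_k=v_k-r_t$, which is non-increasing in $k$ since $\mathbf v$ is non-increasing. Because $\mathbf v$ and $\floor{\mathbf v}$ both have unit norm, $\sum_k e_k=0$, hence $\sum_{k\ge t}e_k\le 0$, so $e_k\le 0$ for at least one index $k\ge t$; combined with monotonicity of $(e_k)_{k\ge t}$ this produces an index $i^\ast\ge t$ with $e_k>0$ for $t\le k<i^\ast$ and $e_k\le 0$ for $k\ge i^\ast$. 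Together with $e_k\ge 0$ for $k<t\le i^\ast$, the index $i^\ast$ witnesses that $\mathbf v$ dominates $\floor{\mathbf v}$. I expect the junction case $k=t-1$ in the monotonicity step to be the only delicate point — the rest is direct bookkeeping — so I would make sure to record that the \textbf{if}-branch guard $\floor{v_i}>r$ is precisely what makes that step (and, via $r_t\ge 0$, the legality of the whole construction) go through.
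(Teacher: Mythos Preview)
Your proof is correct and follows essentially the same approach as the paper: both exploit the rigid prefix/suffix shape of the algorithm's output, read off the unit norm from the definition of $r$, use the \textbf{if}-guard $\floor{v_i}>r$ to secure monotonicity at the junction index, and obtain domination from $u_i\le v_i$ on the rounded prefix together with constancy of the suffix. Your version is a bit more thorough---you explicitly verify that the \textbf{else} branch is eventually reached and argue monotonicity by direct case analysis rather than the paper's contradiction---but the substance is the same.
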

\begin{proof}
Refer to Appendix \ref{app2}.\qed
\end{proof}

Intuitively, the floor of a vector is (in a sense formalized by Lemma \ref{lemma1}) `similar' to the vector, and the similarity is parametrized by $b$. 
\begin{lemma}
\label{lemma1}
For any multiplier vector $\mathbf{v}$ and allocation $\mathbf{S}$, $f(\floor{\mathbf{v}},\mathbf{S}) \geq \frac{3}{4b}\cdot f(\mathbf{v}, \mathbf{S})$.
\end{lemma}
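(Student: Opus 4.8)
The plan is to prove the inequality $f(\floor{\mathbf{v}},\mathbf{S}) \geq \frac{3}{4b}\cdot f(\mathbf{v}, \mathbf{S})$ by a direct componentwise comparison of the two staircases, distinguishing the ``rounded'' components from the flat tail produced in the \texttt{Else} branch of Algorithm \ref{alg1}. Since both sides are linear in the allocation values $f(S_i)$ and $f(\mathbf{v},\mathbf{S}) = \sum_i v_i f(S_i)$, it suffices to show that $\floor{v_i} \geq \frac{3}{4b}\, v_i$ \emph{does not} hold uniformly (the tail can be much smaller than $\frac{3}{4b}v_i$), so a purely componentwise bound is too weak; instead I would split the index set at the break point $i^{*}$ of the algorithm and argue the two parts separately, using the observation from Section \ref{basic} that we may assume $f(S_1)\geq f(S_2)\geq\cdots\geq f(S_n)$.

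First I would handle the rounded prefix. For every $i < i^{*}$ we have $u_i = \floor{v_i}$, and by definition of $Q$ and $\floor{\cdot}$ as rounding down to the nearest power $b^{-k}$, we get $u_i = \floor{v_i} > v_i / b$, hence $\sum_{i<i^{*}} u_i f(S_i) \geq \frac{1}{b}\sum_{i<i^{*}} v_i f(S_i)$. Next I would handle the flat tail $i \geq i^{*}$, where $u_i = r := \bigl(1 - \sum_{k<i^{*}} u_k\bigr)/(n - i^{*}+1)$ for all such $i$. The key structural facts are: (a) $\sum_{i \geq i^{*}} u_i = 1 - \sum_{k<i^{*}} u_k =: \rho$, so the tail of $\mathbf{u}$ carries exactly the mass $\rho$; (b) since $\mathbf{v}$ dominates $\floor{\mathbf{v}}$ (Observation \ref{obs5}) and total mass is $1$ on both sides, $\sum_{i\geq i^{*}} v_i \leq \rho$ as well — wait, domination gives $\sum_{i\geq i^{*}} v_i \le \sum_{i\ge i^*}u_i=\rho$ only after the crossover index, which I must verify is exactly $i^{*}$; and (c) because $\mathbf{u}$ is flat on the tail while $\mathbf{v}$ is non-increasing there, and $f(S_i)$ is also non-increasing, a rearrangement/averaging argument shows $\sum_{i\geq i^{*}} u_i f(S_i) \geq \frac{\rho}{\sum_{i\geq i^{*}} v_i}\cdot\frac{1}{\rho}\cdot$ (something) — more cleanly, $\sum_{i\ge i^*} u_i f(S_i) = r\sum_{i\ge i^*}f(S_i) \ge \frac{1}{\text{(count)}}\bigl(\sum_{i\ge i^*}v_i f(S_i)\bigr)\cdot\frac{\rho\cdot(\text{count})}{\sum v_i}$, and using $\rho \ge \sum_{i\ge i^*}v_i$ this is $\ge \sum_{i\ge i^*}v_i f(S_i)$, so the tail actually loses \emph{nothing} (factor $1 \geq \frac{3}{4b}$). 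The $\frac{3}{4b}$, rather than $\frac1b$, must come from the boundary step $i = i^{*}$ itself (the step where the algorithm breaks because $\floor{v_{i^{*}}} \le r$ or $v_{i^{*}} < q$): there $u_{i^{*}} = r$ could be as small as roughly $v_{i^{*}}/b$ times a constant, and pinning down that constant to $\tfrac34$ is where the $\tfrac34$ enters.

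The main obstacle I anticipate is precisely this boundary analysis: showing that at and around the break index the mass that $\floor{\mathbf{v}}$ ``flattens out'' is at least a $\frac{3}{4b}$ fraction of the corresponding mass of $\mathbf{v}$, weighted by the non-increasing $f(S_i)$. Concretely I expect to need the two elementary bounds already flagged in the paper — $v_j \le r_j/j$ and consecutive-prefix relations — together with the fact that $r \ge \floor{v_{i^{*}}}$ fails only by a bounded amount because $\sum_{k<i^{*}}u_k \ge \sum_{k<i^{*}}v_k/b$ forces $\rho = 1 - \sum_{k<i^{*}}u_k$ to be not too small relative to $\sum_{i\ge i^*}v_i$. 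Assembling these, I would add the prefix bound ($\ge \frac1b$ of its part) and the tail-plus-boundary bound ($\ge \frac{3}{4b}$ of its part) to conclude $f(\floor{\mathbf{v}},\mathbf{S}) \ge \frac{3}{4b} f(\mathbf{v},\mathbf{S})$. The clean way to present it is to prove a single inequality of the form $\sum_{i\in I} u_i f(S_i) \ge \frac{3}{4b}\sum_{i\in I} v_i f(S_i)$ on each of the two blocks $I = \{1,\dots,i^{*}-1\}$ and $I = \{i^{*},\dots,n\}$ and sum.
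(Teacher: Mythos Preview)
Your decomposition into a rounded prefix and a flat tail is the right skeleton, and the prefix bound $u_i=\floor{v_i}\ge v_i/b$ for $i<i^{*}$ matches the paper. But the tail argument has a real error: the claim that ``the tail actually loses nothing'' is false. On the tail the weights $v_i$ are non-increasing and the $f(S_i)$ are non-increasing, so they are \emph{positively} correlated; replacing the $v_i$ by the flat value $r$ (even with at least as much total mass $\rho\ge\sum_{i\ge i^{*}}v_i$) can strictly \emph{decrease} the weighted sum. Concretely, with two tail indices, $r=0.25$, $(v_{i^{*}},v_{i^{*}+1})=(0.4,0)$, $(f(S_{i^{*}}),f(S_{i^{*}+1}))=(10,0)$, one gets $r\sum f(S_i)=2.5<4=\sum v_if(S_i)$; this configuration is consistent with the break condition $\floor{v_{i^{*}}}\le r$ for $b=2$. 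So a rearrangement/averaging argument in the direction you want does not exist, and you do not yet have a route to the constant $\tfrac34$.

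What the paper actually does is make a case split on the \emph{reason} the algorithm broke. If it broke because $\floor{v_{i^{*}}}\le r$ while $v_{i^{*}}\ge q$, then in fact $u_i=r\ge\floor{v_{i^{*}}}\ge v_{i^{*}}/b\ge v_i/b$ for every tail index, so the componentwise bound $u_i\ge v_i/b$ holds \emph{everywhere} and one gets the clean factor $1/b$. The interesting case is when $v_{i^{*}}<q$. Here the crucial extra piece of information---which your proposal never uses---is that $q\le b/n$, hence every tail value satisfies $v_i<b/n$. Combining this with $u_i=r\ge(1-h)/n$ where $h=\sum_{k<i^{*}}v_k$ gives the componentwise bound $u_i\ge\frac{1-h}{b}\,v_i$ on the tail. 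Writing $H$ for the fraction of $f(\mathbf v,\mathbf S)$ contributed by the prefix and using $H\ge h$ (monotonicity of $\mathbf S$), one obtains
\[
f(\floor{\mathbf v},\mathbf S)\ \ge\ \frac{H}{b}\,f(\mathbf v,\mathbf S)+\frac{(1-H)^2}{b}\,f(\mathbf v,\mathbf S)\ =\ \frac{H+(1-H)^2}{b}\,f(\mathbf v,\mathbf S),
\]
and the $\tfrac34$ is exactly $\min_{H\in[0,1]}\bigl(H+(1-H)^2\bigr)$. This quadratic optimization, driven by the bound $v_i<b/n$ on the tail, is the missing idea in your plan.
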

\begin{proof}
Refer to Appendix \ref{appendix1}.\qed
\end{proof}

We will construct our preliminary set of vectors $\mathcal{U}'$ as $$\mathcal{U}'\ =\ \{\ \mathbf{u}\ :\ \mathbf{u} = \floor{\mathbf{v}}\ \mbox{for some multiplier vector}\ \mathbf{v}\ \}$$ It turns out that $\mathcal{U}'$ is too large for our purposes. Hence we construct a subset $\mathcal{U} \subseteq \mathcal{U}'$, which is small enough. Referring back to the staircase representation of a multiplier vector (Figure \ref{fig1}), we constructed $\mathcal{U}'$ by discretizing the `height' of each step - by fitting the vectors vertically. Since rounding down the components of $\mathbf{v}$ might lead to many components of $\mathbf{u} = \floor{\mathbf{v}}$ having the same value, $\mathbf{u}$ also looks like a staircase, perhaps with `wider' steps. Each step of $\mathbf{u}$ may have any integral width - at most $n$. 

We construct $\mathcal{U}$ from $\mathcal{U}'$ by further restricting how wide a step can be - by horizontal fitting (See Figure \ref{fig2}). We allow each step (except the last) to be of width $\ceil{a^k}$ for some integer $k \geq 0$ - where $a > 1$ is a suitably chosen parameter of the mechanism. To this end, we need to slightly formalize the staircase representation of a multiplier vector, which till now we only used as a visual aid. By a \emph{step} of the staircase of $\mathbf{v}$, we will mean a maximal interval $[i_1, ..., i_2] \subseteq [1, ..., n]$ such that $v_{i_1} = v_{i_2}$. All the indices $i_1 \leq i \leq i_2$ will be said to belong to the step, whereas $i_1$ and $i_2$ and the first and last indices of the step. The \emph{height} of the step is given by $v_{i_1}$ and the \emph{width} by $i_2-i_1+1$.

\textbf{Remark}: Notice that just as a multiplier vector can be specified by the $n$-tuple $(v_1, ..., v_n)$, it can also be identified by specifying the height and width of each step of its staircase representation. In fact, specifying all but the last step of a staircase fixes the last step due to the unit norm requirement.

For a multiplier vector $\mathbf{v}$, we define the core $\overleftarrow{\mathbf{v}}$ of $\mathbf{v}$ as:

\begin{definition}[Core $\overleftarrow{\mathbf{v}}$]
The core $\overleftarrow{\mathbf{v}}$ of a multiplier vector $\mathbf{v}$ is the vector $\mathbf{u}$ constructed by Algorithm \ref{alg2}.
\end{definition}

\begin{algorithm}[H]
\caption{ConstructCore \label{alg2}}
\SetKw{KwBreak}{break}
$i_1 \leftarrow 1$; $j_1 \leftarrow 1$;

\While{$i_1 \leq n$}
{
	$r \leftarrow \left(1-\sum_{i=1}^{j_1-1}{u_i}\right)/(n-j_1+1)$;

	\BlankLine
	
	\If{$v_{i_1} > r$}
	{
		Find the largest index $i_2$ such that $v_{i_1} = v_{i_2}$;		
		\BlankLine		
		Find largest integer $k$ such that $\ceil{a^k} \leq (i_2-j_1+1)$;
		
		\For{$i\ =\ j_1$ to $j_1+\ceil{a^k}-1$}
		{
			$u_i\ \leftarrow\ v_{i_1}$;
		}
		
		$i_1 \leftarrow i_2+1$;
		
		$j_1 \leftarrow j_1+\ceil{a^k}$;
	}
	\Else
	{
		\For{$i\ =\ j_1$ to $n$}
		{
			$u_i\ \leftarrow\ r$;
		}
		\KwBreak
	}
	
}
\end{algorithm}

\noindent \textbf{Operation of Algorithm \ref{alg2}}: Each iteration of the \texttt{while} loop processes one step of $\mathbf{v}$ and $\mathbf{u}$. $i_1$ and $j_1$ hold the first index of the current step of $\mathbf{v}$ and $\mathbf{u}$ respectively. $r$ is the minimum height of the current step of $\mathbf{u}$ by monotonicity. If $r \geq v_{i_1}$, then the requirement for unit $l_1$ norm forces us to introduce the last step of the staircase of $\mathbf{u}$. Otherwise, $[i_1, ..., i_2]$ is the current step of $\mathbf{v}$ and we set the width of the current step of $\mathbf{u}$ to be $\ceil{a^k}$.

\begin{figure}[h]
\centering
\includegraphics[height=4cm]{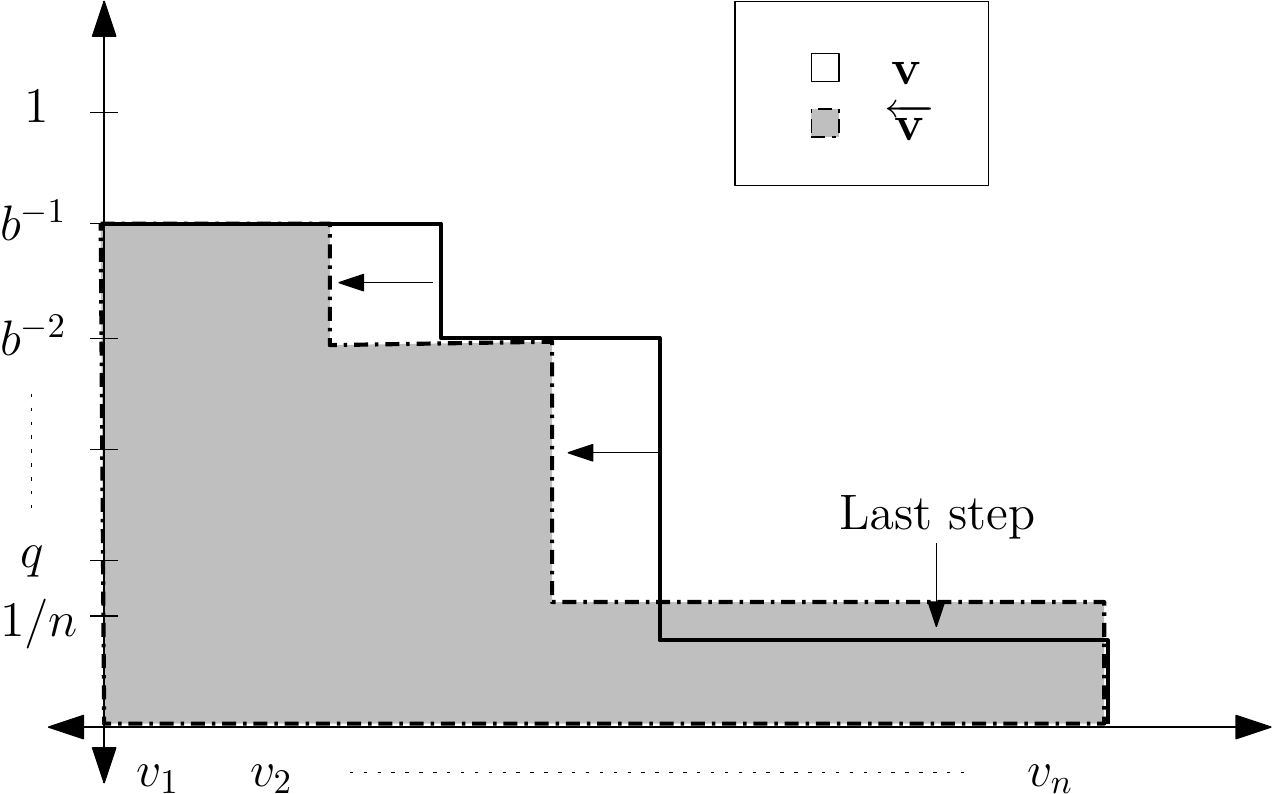}
\caption{Horizontal fitting of $\mathbf{v}$.\label{fig2}}
\end{figure}

\begin{observation}
\label{obs6}
The core of a vector $\mathbf{v}$ is a multiplier vector itself, \textit{i.e.} it has non-increasing components and unit $l_1$ norm. Moreover, $\mathbf{v}$ dominates $\overleftarrow{\mathbf{v}}$.
\end{observation}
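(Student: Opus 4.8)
The plan is to verify three things about the vector $\mathbf{u} = \overleftarrow{\mathbf{v}}$ produced by Algorithm \ref{alg2}: (i) $\mathbf{u}$ has non-increasing components, (ii) $\mathbf{u}$ has unit $l_1$ norm, and (iii) $\mathbf{v}$ dominates $\mathbf{u}$. The proof tracks the invariant maintained at the top of each iteration of the \texttt{while} loop, namely that after the indices $1, \dots, j_1 - 1$ of $\mathbf{u}$ have been assigned, we have $u_1 \geq u_2 \geq \dots \geq u_{j_1 - 1}$, that these coincide with the heights of the first few full steps of $\mathbf{v}$'s staircase (so in particular $u_k \geq v_k$ is not what we want — rather $u_k = v_k$ for indices lying in an already-processed step, and the widths of $\mathbf{u}$'s steps are no larger than those of $\mathbf{v}$), and that $\sum_{k=1}^{j_1-1} u_k = \sum_{k=1}^{i_1 - 1} v_k$ where $i_1$ is the corresponding pointer into $\mathbf{v}$. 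The key numerical fact making this work is $\ceil{a^k} \le i_2 - j_1 + 1$, which guarantees that each step of $\mathbf{u}$ is truncated to be no wider than the remaining portion of the current step of $\mathbf{v}$, so that $j_1$ advances no faster than $i_1$.

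First I would handle monotonicity. Within a single iteration all newly assigned entries $u_{j_1}, \dots, u_{j_1 + \ceil{a^k} - 1}$ equal $v_{i_1}$, which is $\le v_{i_1 - 1} = u_{j_1 - 1}$ by the invariant (or there is no previous entry), so the sequence stays non-increasing; and since successive iterations process successive steps of $\mathbf{v}$, whose heights strictly decrease, monotonicity is preserved across iterations. In the terminating \texttt{else} branch, all remaining entries are set to $r = (1 - \sum_{i<j_1} u_i)/(n - j_1 + 1)$, and the branch is entered precisely when $v_{i_1} \le r$; combined with $u_{j_1 - 1} = v_{i_1 - 1} \ge v_{i_1}$... one must check $u_{j_1-1} \ge r$, which follows because $u_{j_1-1}$ is an average-or-larger of the "mass still to distribute" — more carefully, $r$ is the value that would equalize the remaining budget over the remaining slots, and since $u_{j_1-1} \ge v_{i_1}$ was a height that "fit" in the earlier test $v_{i_1'} > r'$, one shows $r \le u_{j_1-1}$ by the same prefix-average argument used for Observation \ref{obs5} (Algorithm \ref{alg1}). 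Unit norm is immediate: if the loop exits via \texttt{break}, the last step is defined exactly so that $\sum_i u_i = 1$; if it exits because $i_1 > n$, the width invariant forces $j_1 = n+1$ as well when $i_1 = n+1$, and the running-sum invariant gives $\sum_{k=1}^n u_k = \sum_{k=1}^n v_k = 1$.

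For domination, I would exhibit the crossover index $i$ from the definition. Because each step of $\mathbf{u}$ has the same height as the corresponding step of $\mathbf{v}$ but (weakly) smaller width, the partial sums satisfy $\sum_{k=1}^{t} u_k \le \sum_{k=1}^{t} v_k$ for every $t$ up to the point where the terminating step begins; equivalently $u_k \ge v_k$ cannot happen "early" — in fact on the processed prefix the staircase of $\mathbf{u}$ lies weakly above that of $\mathbf{v}$ pointwise in the sense $u_k \ge v_k$ for $k < j_1$ (each $u_k$ is a step-height of $\mathbf{v}$ reached at an index $\le k$). So there is an index $i$ — the first index of the terminating (flat) step of $\mathbf{u}$, or $n+1$ if none — with $u_k \ge v_k$ for $k < i$ and $u_k = r \le v_k$... wait, no: in the terminating step $u_k = r$ and we entered the \texttt{else} because $v_{i_1} \le r$, and all later $v_k \le v_{i_1} \le r = u_k$, so actually $u_k \ge v_k$ there too, which would give the trivial domination with $i = n+1$, but then unit norm forces equality termwise — the honest statement is that the crossover is where $\mathbf{u}$ switches from tracking $\mathbf{v}$'s large steps to the flat tail, and one checks the inequalities $u_k \ge v_k$ for $k<i$, $u_k \le v_k$ for $k \ge i$ directly from the two invariants. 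The main obstacle is bookkeeping: making the step-width invariant $\ceil{a^{k}} \le i_2 - j_1 + 1 \le i_2 - i_1 + 1$ precise enough to conclude "$j_1$ never overtakes $i_1$", and carefully treating the terminating step so that the direction of the inequality $v_{i_1}$ versus $r$ lines up with the domination definition; everything else is routine and mirrors the proof of Observation \ref{obs5} in Appendix \ref{app2}.
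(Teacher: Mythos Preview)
Your proposal has a genuine error in the domination argument: the inequalities are reversed throughout. You assert that on the processed prefix ``$u_k \ge v_k$ for $k < j_1$ (each $u_k$ is a step-height of $\mathbf{v}$ reached at an index $\le k$)'', and you end with ``one checks the inequalities $u_k \ge v_k$ for $k<i$, $u_k \le v_k$ for $k \ge i$''. Both statements are backwards. Because $j_1$ never overtakes $i_1$ (which you correctly observe), an index $k$ lying in the $t$-th non-terminal step of $\mathbf{u}$ satisfies $k < j_1^{(t+1)} \le i_1^{(t+1)}$, and $u_k$ equals the height of the $t$-th step of $\mathbf{v}$, i.e.\ $u_k = v_j$ for some $j$ with $k \le j < i_1^{(t+1)}$. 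Hence $u_k = v_j \le v_k$, not $\ge$. Concretely, take $n=8$, $a=2$, $\mathbf{v}=(0.2,0.2,0.2,0.15,0.0625,0.0625,0.0625,0.0625)$: the algorithm gives $\mathbf{u}=(0.2,0.2,0.15,0.15,0.075,\dots)$ and $u_3 = 0.15 < v_3 = 0.2$. Your own partial-sum observation $\sum_{k\le t}u_k \le \sum_{k\le t}v_k$ already points the right way, but the sentence that follows contradicts it. What is actually needed for ``$\mathbf{v}$ dominates $\mathbf{u}$'' is $v_k \ge u_k$ before the crossover and $v_k \le u_k$ after; the paper obtains this cleanly by noting that any $i$ not in the last step of $\mathbf{u}$ has $u_i = v_j$ for some $j \ge i$, hence $u_i \le v_i$, and then taking the crossover to be the smallest $i$ with $u_i > v_i$, which must lie in the last (flat) step, so $u_j = u_i > v_i \ge v_j$ for all $j \ge i$.

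A secondary gap: your monotonicity check that $u_{j_1-1} \ge r$ at the terminating step is left to an unspecified ``prefix-average argument''. The paper instead argues by contradiction: if $u_i < u_{i+1}$ then $i+1$ is the first index of the last step, $i$ the last index of the penultimate step starting at some $j$, and summing gives $u_j = u_i < (1-\sum_{k<j}u_k)/(n-j+1)$, contradicting the branch condition under which $u_j$ was assigned. Your unit-norm claim that ``the width invariant forces $j_1 = n+1$ when $i_1 = n+1$'' is also not established (you only have $j_1 \le i_1$).
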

\begin{proof}
Refer to Appendix \ref{app3}.\qed
\end{proof}

\begin{lemma}
\label{lemma2}
For any multiplier vector $\mathbf{v}$ and allocation $\mathbf{S}$,  $f(\overleftarrow{\mathbf{v}}, \mathbf{S})\ \geq\ f(\mathbf{v}, \mathbf{S})/a$.
\end{lemma}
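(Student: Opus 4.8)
The plan is to mirror the argument used to establish Lemma~\ref{lemma1} for the floor, but now track the effect of the horizontal (width) rounding carried out by Algorithm~\ref{alg2}. The natural strategy is to compare $\overleftarrow{\mathbf{v}}$ to $\mathbf{v}$ step-by-step, or rather to use the ``horizontal cut'' decomposition of $f(\cdot,\mathbf{S})$ as in equation~\eqref{eq7}: since $\mathbf{S}$ may be assumed to satisfy $f(S_1)\geq\dots\geq f(S_n)$ (Observation~\ref{assum2}, and this is the hypothesis under which $f(\mathbf{v},\mathbf{S})$ behaves monotonically), both $f(\mathbf{v},\mathbf{S})$ and $f(\overleftarrow{\mathbf{v}},\mathbf{S})$ can be written as nonnegative combinations of the prefix sums $\Phi_t := \sum_{i=1}^t f(S_i)$. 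Concretely, $f(\mathbf{v},\mathbf{S}) = \sum_t (v_t - v_{t+1})\,\Phi_t$, and likewise for $\overleftarrow{\mathbf{v}}$ with its own decreasing sequence $u_1\geq\dots\geq u_n$.

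First I would set up notation for one generic step of $\mathbf{v}$: say it occupies indices $[i_1,i_2]$ with height $h = v_{i_1}$ and width $w = i_2 - i_1 + 1$, and Algorithm~\ref{alg2} keeps width $w' = \ceil{a^k}$, the largest value of this form with $\ceil{a^k}\le w$. The key quantitative fact is $w' \geq w/a$ (since $\ceil{a^{k+1}} > w$ implies $a^{k+1} > w - 1$, and a short argument gives $\ceil{a^k} \ge w/a$; this is exactly the kind of routine estimate I would not grind through here). So each step of $\mathbf{v}$ is replaced in $\overleftarrow{\mathbf{v}}$ by a step of the same height but width at least a $1/a$ fraction. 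Because Algorithm~\ref{alg2} processes steps left to right and never increases a height, and because we dominate $\mathbf{v}$ (Observation~\ref{obs6}), the cumulative mass placed by $\overleftarrow{\mathbf{v}}$ on the first $j$ indices is, for the appropriate choice of index correspondence, at least $1/a$ times that placed by $\mathbf{v}$; combined with $f(S_1)\ge\dots\ge f(S_n)$ this should push through to $f(\overleftarrow{\mathbf{v}},\mathbf{S}) \ge f(\mathbf{v},\mathbf{S})/a$.

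The cleanest way to make the last sentence rigorous, I think, is \emph{not} to reason about prefix masses directly but to exhibit an explicit ``averaging'' domination: build an intermediate vector $\mathbf{w}$ from $\mathbf{v}$ by, within each step $[i_1,i_2]$, replacing the $w$ copies of height $h$ by $w'$ copies of height $h$ followed by $w - w'$ copies of height $0$ — i.e., squashing each step to its rounded width and dumping the excess as zeros at the step's tail. One checks $\mathbf{w}$ has non-increasing components and that $\mathbf{v}$ dominates $\mathbf{w}$, so by Lemma~\ref{lemma6}, $f(\mathbf{w},\mathbf{S}) \le f(\mathbf{v},\mathbf{S})$ — wrong direction, so instead I would go the other way: show $\overleftarrow{\mathbf{v}}$ dominates $\mathbf{w}$ after suitable scaling, or more simply observe that $a\cdot\overleftarrow{\mathbf{v}}$ componentwise majorizes $\mathbf{v}$ in the prefix-sum sense and invoke the same one-line inequality as in the proof of Lemma~\ref{lemma6}. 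That is, I would prove the single clean claim: for every $t$, $\sum_{i\le t} (a\,u_i) \ge \sum_{i\le t} v_i$, and then $a\,f(\overleftarrow{\mathbf v},\mathbf S) - f(\mathbf v,\mathbf S) = \sum_t \big(a u_t - a u_{t+1} - v_t + v_{t+1}\big)\Phi_t$ rearranges, via Abel summation, into $\sum_t \big((\sum_{i\le t} a u_i) - (\sum_{i\le t} v_i)\big)(f(S_t) - f(S_{t+1})) \ge 0$.

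The main obstacle is verifying that prefix-sum claim $\sum_{i\le t} a u_i \ge \sum_{i \le t} v_i$ for \emph{every} $t$, not just at step boundaries. Within a step of $\mathbf{v}$ of height $h$ that has been truncated to width $w' \geq w/a$, for the first $w'$ indices of that step $u_i = h = v_i$, so the prefix sum of $a\mathbf{u}$ races ahead; for the remaining $w - w'$ indices $u_i$ drops (to the next step's height or lower) while $v_i$ stays at $h$, so $\mathbf{v}$'s prefix sum catches up — the worry is whether it can overtake. Here I would use that the total mass of $\overleftarrow{\mathbf v}$ restricted to the step's block of $w'$ indices, namely $w' h \ge (w/a) h = \frac1a (wh)$, is at least $1/a$ of the mass $\mathbf v$ puts on its $w$-index block, and that both vectors have already ``agreed on a $1/a$-fraction or more'' on all earlier blocks by induction; a careful but elementary induction on steps closes it. The handling of the final (possibly non-$Q$-valued, ``leftover'') step produced by the \texttt{break} branch needs a separate remark, but there $\overleftarrow{\mathbf v}$ and $\mathbf v$ carry the same remaining mass spread at least as early, so it only helps.
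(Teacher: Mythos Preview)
Your Abel-summation approach is valid and does prove the lemma, but it is organized differently from the paper's argument, and your sketch of the key step contains a misconception worth flagging.

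\textbf{Comparison with the paper.} The paper argues step-by-step on the staircases. It records four structural facts about $\mathbf u=\overleftarrow{\mathbf v}$: (i) $\mathbf u$ has no more steps than $\mathbf v$; (ii) for each non-final step $i$, the $i$'th step of $\mathbf v$ is at most $a$ times as wide as the $i$'th step of $\mathbf u$, and both have the same height; (iii) the $i$'th step of $\mathbf u$ begins no later than the $i$'th step of $\mathbf v$ (so it sees larger $f(S_k)$ values); (iv) on the last step of $\mathbf u$ one has $u_k\ge v_k$. From (ii)--(iii) and the monotonicity of $\mathbf S$ the paper gets, for each $i<s_2$, that the $i$'th step of $\mathbf u$ contributes at least $1/a$ of what the $i$'th step of $\mathbf v$ does; (iv) handles the tail. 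Summing gives the lemma directly. Your route instead reduces everything to the single prefix-sum claim $\sum_{k\le t} a\,u_k \ge \sum_{k\le t} v_k$ for all $t$, after which Abel summation against the nonnegative increments $f(S_t)-f(S_{t+1})$ finishes. Both routes ultimately rest on the same structural facts (ii)--(iv); the paper's is a bit shorter because it compares contributions step-by-step without passing through the majorization reformulation.

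\textbf{The gap in your sketch.} Your description of the core is off in a way that would derail the induction as you wrote it. You say that within the $i$'th step $[i_1,i_2]$ of $\mathbf v$, ``for the first $w'$ indices of that step $u_i=h=v_i$''. That is not how Algorithm~\ref{alg2} works: the $i$'th step of $\mathbf u$ occupies $[j_1,j_1+w'-1]$ with $j_1\le i_1$ (because all earlier steps were shortened), so it is shifted left, not truncated in place. In particular $u_k=h$ need not coincide with $v_k=h$ on any common block, and $w'$ can even exceed $w$ (the algorithm uses $i_2-j_1+1\ge w$ as the available width). Once you use the correct picture --- $j_1\le i_1$ and $j_1+w'-1\le i_2$ --- the prefix-sum claim does go through: for $t$ in a non-final step of $\mathbf u$ one gets $\sum_{k\le t}u_k\ge\sum_{\ell<i}w'_\ell h_\ell$ while $\sum_{k\le t}v_k\le\sum_{\ell\le i}w_\ell h_\ell$ (or the analogous partial-step expression), and $a w'_\ell\ge w_\ell$ closes it; for $t$ in the last step of $\mathbf u$, property (iv) makes $a\sum_{k\le t}u_k-\sum_{k\le t}v_k$ nondecreasing in $t$, so the inequality propagates from $t=j^*-1$ to $t=n$. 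Your ``careful but elementary induction on steps'' is salvageable, but only after replacing the in-place-truncation picture by the left-shift one.
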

\begin{proof}
Refer to Appendix \ref{appendix2}.\qed
\end{proof}

We now define our set of vectors $\mathcal{U}$ as follows: $ \mathcal{U}\ =\ \{\ \overleftarrow{\mathbf{v}}\ :\ \mathbf{v} \in \mathcal{U}'\ \}$. We populate the range $\mathcal{R}$ of allocations as $\mathcal{R}\ =\ \{\ \mathrm{B}(\mathbf{v})\ :\ \mathbf{v} \in \mathcal{U}\ \}$ where $\mathrm{B}(\mathbf{v})$ is the $\alpha$-approximate allocation returned by the black box algorithm.

\subsection{Proof of Theorem \ref{theorem1}}

We run the following maximal-in-range mechanism: Given an input multiplier vector $\mathbf{v}$ we return the allocation $\mathbf{T} \in \mathcal{R}$ that maximizes $f(\mathbf{v}, \mathbf{T})$. We need to prove that $f(\mathbf{v}, \mathbf{T})\ \geq\ \frac{3\alpha}{4ab}\cdot \mathrm{OPT}(\mathbf{v})$

Let $\mathbf{S} = \mathrm{A}(\mathbf{v})$ be the optimal allocation for $\mathbf{v}$ and $\overleftarrow{\floor{\mathbf{v}}}$ be \emph{the core of the floor} of $\mathbf{v}$. Combining Lemmas \ref{lemma1} and \ref{lemma2}, we conclude that $f(\overleftarrow{\floor{\mathbf{v}}}, \mathbf{S})\ \geq\ \frac{3}{4ab}\cdot \mathrm{OPT}(\mathbf{v})$. Since $\overleftarrow{\floor{\mathbf{v}}} \in \mathcal{U}$, there exists an allocation $\mathbf{X} \in \mathcal{R}$ such that 
\begin{equation}
\label{eq1}
f(\overleftarrow{\floor{\mathbf{v}}}, \mathbf{X})\ \geq\ \alpha\cdot \mathrm{OPT}(\overleftarrow{\floor{\mathbf{v}}})\ \geq\ \alpha\cdot f(\overleftarrow{\floor{\mathbf{v}}}, \mathbf{S})\ \geq\ \frac{3\alpha}{4ab}\cdot \mathrm{OPT}(\mathbf{v})
\end{equation}

Since $\mathbf{v}$ dominates $\floor{\mathbf{v}}$ which in turn dominates $\overleftarrow{\floor{\mathbf{v}}}$ (Refer to Observation \ref{obs5} and \ref{obs6}), application of Lemma \ref{lemma6} yields:
\begin{equation}
\label{eq2}
f(\mathbf{v}, \mathbf{X})\ \geq\ f(\floor{\mathbf{v}}, \mathbf{X})\ \geq\ f(\overleftarrow{\floor{\mathbf{v}}}, \mathbf{X})
\end{equation}

Using equations \eqref{eq1} and \eqref{eq2}, $$f(\mathbf{v}, \mathbf{T})\ \geq\ f(\mathbf{v}, \mathbf{X})\ \geq\ f(\overleftarrow{\floor{\mathbf{v}}}, \mathbf{X})\ \geq\ \frac{3\alpha}{4ab}\cdot \mathrm{OPT}(\mathbf{v})$$ The running time of the mechanism is established by Lemma \ref{lemma5}, which finishes the proof of Theorem \ref{theorem1}.

\begin{lemma}
\label{lemma5}
$|\mathcal{R}|\ =\ O\left((\log_a{n})^{\log_b{n}}\right)$
\end{lemma}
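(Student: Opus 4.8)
Since $\mathcal{R} = \{\,\mathrm{B}(\mathbf{u}) : \mathbf{u} \in \mathcal{U}\,\}$, we have $|\mathcal{R}| \le |\mathcal{U}|$, so it suffices to bound the number of distinct vectors in $\mathcal{U}$. The plan is to exploit the structural restrictions built into the floor and core operations. Every $\mathbf{u} \in \mathcal{U}$ equals $\overleftarrow{\floor{\mathbf{v}}}$ for some multiplier vector $\mathbf{v}$, and by the constructions in Algorithm \ref{alg1} and Algorithm \ref{alg2} (see also the accompanying discussion in the text), the staircase of $\mathbf{u}$ has the property that every step other than the last one has height in $Q$ and width of the form $\ceil{a^k}$ for some integer $k \ge 0$. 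By the Remark on the staircase representation, $\mathbf{u}$ is fully determined by the list of (height, width) pairs of its non-last steps, since the last step is then forced by the unit-norm requirement. Hence it is enough to count such lists.

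Next I would bound the number of non-last steps and the number of choices for each. Because a step is a \emph{maximal} interval of constant value and the components are non-increasing, distinct steps have strictly decreasing heights; as all non-last step heights lie in $Q$ and $|Q| = \log_b n$, a vector of $\mathcal{U}$ has at most $\log_b n$ non-last steps, and the collection of their heights is simply a \emph{subset} of $Q$ (its decreasing order being forced). For the widths: each non-last step has width $\ceil{a^k} \le n$, which forces $k \le \log_a n$, so there are at most $\log_a n + 1$ admissible widths, independently for each step.

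Assembling these counts, a vector in $\mathcal{U}$ is obtained by first choosing a subset $H \subseteq Q$ (its size $t = |H|$ being the number of non-last steps) and then, for each of the $t$ steps, one of at most $\log_a n + 1$ widths, giving
$$|\mathcal{R}|\ \le\ |\mathcal{U}|\ \le\ \sum_{t=0}^{|Q|}\binom{|Q|}{t}(\log_a n + 1)^t\ =\ (\log_a n + 2)^{|Q|}\ =\ (\log_a n + 2)^{\log_b n}\ =\ O\big((\log_a n)^{\log_b n}\big),$$
where the penultimate equality is the binomial theorem and the last step uses $\log_b n / \log_a n = \log_b a = O(1)$, so that $(1 + 2/\log_a n)^{\log_b n} = O(1)$ (rounding in $|Q|$ only affects the hidden constant). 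The step I expect to matter most is precisely this use of the binomial identity: it is essential that the heights form a \emph{set} rather than an arbitrary length-$t$ sequence — counting them as a sequence would produce a spurious factor of $(\log_b n)^{\log_b n}$ and break the claimed bound. The only genuinely delicate point is the structural claim in the first paragraph, namely that after composing the floor and core operations all but the last step of the result still have heights in $Q$ and widths of the prescribed form; this follows by inspecting how each \texttt{while}/\texttt{for} iteration of Algorithms \ref{alg1} and \ref{alg2} creates exactly one step, the terminal \texttt{else} branch contributing the single unconstrained last step.
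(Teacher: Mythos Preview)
Your proposal is correct and follows essentially the same approach as the paper: bound $|\mathcal{R}|$ by $|\mathcal{U}|$, observe that every non-last step of a vector in $\mathcal{U}$ has height in $Q$ and width of the form $\ceil{a^k}$, note that there are at most $|Q|=O(\log_b n)$ such steps with distinct heights, and count. The paper's own proof jumps directly from these structural facts to the bound $O((\log_a n)^{\log_b n})$ without writing out the combinatorics; your binomial-theorem computation $\sum_t\binom{|Q|}{t}(\log_a n+1)^t=(\log_a n+2)^{|Q|}$ is simply a more explicit version of the same count, and your observation that the heights form a set rather than a sequence is exactly why the bound is $(\log_a n)^{\log_b n}$ rather than $(\log_b n\cdot\log_a n)^{\log_b n}$.
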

\begin{proof}
$|\mathcal{R}|$ is bounded by $|\mathcal{U}|$. $\mathcal{U}$ consists of only those vectors which are cores of floors of some multiplier vectors. We have seen that each step of the staircase of $\mathbf{v} \in \mathcal{U}$ except the last must be of width $w = \ceil{a^k}$ for some integer $k$. Moreover, there can be only $|Q| = O(\log_b{n})$ such steps and at most one of each height. We have also remarked that specifying all but the last step of a staircase fixes it. Therefore there can be at most $O\left((\log_a{n})^{\log_b{n}}\right)$ distinct staircases in $\mathcal{U}$.\qed
\end{proof}

\bibliography{GKW_auctions}
\bibliographystyle{plain}

\appendix

\section{The Greedy Algorithm is not Monotone}
\label{app1}

The greedy algorithm in our setting works as follows: In each step it assigns one unallocated item $j$ to a buyer $i$, where the pair $(i,j)$ is chosen so as maximize the marginal gain in the objective function. That is, if buyer $i$ had been allocated the set $S$ of items before the current step, then $v_i\left(f(S\cup \{j\})-f(S)\right)$ is maximized.

We will construct an example that adheres to our formulation of the TV Ad auctions problem. Consider an instance with two advertisers $i_1$ and $i_2$ and three ad-slots $j_1, j_2, j_3$. Suppose there are 10 viewers $k_1, ..., k_{10}$. Viewers $k_1$ to $k_5$ watch slot $j_1$, $k_6$ to $k_{10}$ watch slot $j_2$ and $k_3$ to $k_8$ watch slot $j_3$. The public function $f$ in this case is the coverage function: for a set $S$ of slots $f(S)$ is the number of unique viewers who watch any slot in $S$. To prove that the greedy algorithm does not make monotonic allocations in this example, consider two cases:

\begin{enumerate}
\item \textbf{$v_{i_1} = 1$ and $v_{i_2} = 1+\epsilon$}: In the first step, the greedy algorithm assigns the largest slot $j_3$ (with six viewers) to $i_2$. In the next two steps, it assigns both $j_1$ and $j_2$ to $i_1$. Therefore, the $i_1$ receives the set $\{j_3\}$ of total allocation value (not counting the private multiplier) 6. 
\item \textbf{$v_{i_1} = 1$ and $v_{i_2} = 1-\epsilon$}: In the first step, the greedy algorithm assigns the largest slot $j_3$ (with six viewers) to $i_1$. In the next two steps, it assigns both $j_1$ and $j_2$ to $i_2$. Therefore, the $i_1$ receives the set $\{j_1,j_2\}$ of total allocation value (not counting the private multiplier) 10. 
\end{enumerate}

Clearly, $i_2$ receives a larger allocation at a lower private valuation. Therefore, the greedy algorithm is not monotone.

\section{Proof of Observation \ref{obs5}}
\label{app2}
The procedure to compute $\mathbf{u} = \floor{\mathbf{v}}$ easily ensures the unit $l_1$ norm. Now to prove monotonicity by contradiction, assume that there exists $i$ such that $u_i < u_{i+1}$. Since $\mathbf{v}$ satisfies monotonicity, this can only happen if $v_i$ was strictly rounded down to $u_i$ and $v_{i+1}$ was not. Therefore $$1-\sum_{k = 1}^{i-1}{u_k}\ =\ \sum_{k=i}^n{u_k}\ =\ u_i + \sum_{k=i}^n{u_k} \ =\ u_i + (n-i)u_{i+1}>\ (n-i+1)\cdot u_i$$ This implies $$\floor{v_i}\ =\ u_i\ <\ \frac{1-\sum_{k = 1}^{i-1}{u_k}}{n-i+1}$$ which is impossible since the right-hand side of the above inequality is the minimum value $u_i$ could have been assigned. 

To see that $\mathbf{v}$ dominates $\mathbf{u} = \floor{\mathbf{v}}$, observe that if the index $i$ does not belong to the last step of $\mathbf{u}$, then $v_i$ must have been rounded down to $u_i$, and therefore, $u_i \leq v_i$. Now consider the smallest $i$ such that $u_i > v_i$. Then $i$ must belong to the last step of $\mathbf{u}$, and hence $u_j = u_i > v_i \geq v_j$ for any $j \geq i$.\qed

\section{Proof of Lemma \ref{lemma1}}
\label{appendix1}

Define $\mathbf{u} = \floor{\mathbf{v}}$. Let $p$ be the highest index such that $v_p$ is rounded down by the procedure that constructs $\mathbf{u}$, \textit{i.e.} $u_p = \floor{v_p}$ and $u_p > r = u_{p+1}$. Since, $\sum_{i=1}^p{u_i}\ \leq\ \sum_{i=1}^p{v_i}$, it is clear that $p < n$. Now for $i \leq p$, we have $u_i=\floor{v_i}\geq v_i/b$. Consider two cases about $v_{p+1}$:

\textbf{Case 1 - $v_{p+1} \geq q$}: In this case, $u_{p+1}\ =\ r\ \geq\ \floor{v_{p+1}}\ \geq\ v_{p+1}/b$. For $i \geq p+1$, we have $v_i \leq v_{p+1}$ and $u_i = u_{p+1}$ implying $u_i \geq v_i/b$. Therefore, $$f(\mathbf{u}, \mathbf{S})\ =\ \sum_{i=1}^n{u_if(S_i)}\ \geq\ \frac{1}{b}\sum_{i=1}^n{v_if(S_i)}\ =\ \frac{1}{b}\cdot f(\mathbf{v}, \mathbf{S})$$

\textbf{Case 2 - $v_{p+1} < q$}: Let $h = \displaystyle\sum_{i=1}^pv_i$ and $H = \left(\displaystyle\sum_{i=1}^p{v_if(S_i)}\right)/f(\mathbf{v}, \mathbf{S})$. From the monotonicity of $\mathbf{S}$, we conclude that $$H\cdot f(\mathbf{v}, \mathbf{S})\ =\ \sum_{i=1}^p{v_if(S_i)}\ \geq\ h\cdot f(\mathbf{v}, \mathbf{S})$$ and hence $H \geq h$.

Since $u_i \leq v_i$ for all $i \leq p$, and both $\mathbf{u}$ and $\mathbf{v}$ must have unit $l_1$ norm, we have $\sum_{i>p}{u_i}\ \geq\ \sum_{i > p}{v_i}\ =\ (1-h)$. Hence, $u_i \geq \frac{1-h}{n}$ for $i > p$. By definition, $v_i < q \leq \frac{b}{n}$ for $i > p$. Together, these imply $u_i \geq (1-h)v_i/b$. Finally, using $H \geq h$, we conclude $$\sum_{i>p}{u_if(S_i)}\ \geq\ \frac{1-h}{b}\left(\sum_{i>p}{v_if(S_i)}\right)\ \geq\ \frac{1-H}{b}\left[(1-H)f(\mathbf{v}, \mathbf{S})\right]$$

Combining these pieces together, we get:
\begin{eqnarray}
\nonumber f(\floor{\mathbf{v}},\ \mathbf{S}) & = & \sum_{i=1}^p{u_if(S_i)}\ +\ \sum_{i>p}{u_if(S_i)}\\
\nonumber & \geq & \frac{1}{b}\sum_{i=1}^p{v_if(S_i)}\ +\ \frac{(1-H)^2}{b}\cdot f(\mathbf{v}, \mathbf{S})\\
\nonumber & = & \frac{H+(1-H)^2}{b}\cdot f(\mathbf{v}, \mathbf{S}) \ \geq \ \frac{3}{4b}\cdot f(\mathbf{v}, \mathbf{S})
\end{eqnarray}\qed

\section{Proof of Observation \ref{obs6}}
\label{app3}

The algorithm to construct $\mathbf{u} = \overleftarrow{\mathbf{v}}$ itself easily ensures the unit norm. To prove monotonicity by contradiction, assume that there exists $i$ such that $u_i < u_{i+1}$. This can only happen is $i+1$ is the first index of the last step of $\mathbf{u}$ and $i$ is the last index of the penultimate step. Let $j$ be the first index of the penultimate step. Then $$1-\sum_{k = 1}^{j-1}{u_k}\ =\ \sum_{k = j}^n{u_k}\ =\ (i-j+1)u_i + (n-i)u_{i+1}\ >\  (n-j+1)\cdot u_i$$ This means $$u_j\ =\ u_i\ <\ \left(1-\sum_{k = 1}^{j-1}{u_k}\right)/(n-j+1)$$ which is impossible since the right-hand side of the above inequality is the minimum value $u_j$ could have been assigned.

To see that $\mathbf{v}$ dominates $\mathbf{u} = \overleftarrow{\mathbf{v}}$, observe that if the index $i$ does not belong to the last step of $\mathbf{u}$, then $u_i = v_j$ for some $j \geq i$, and hence $u_i = v_j \leq v_i$. Now consider the smallest $i$ such that $u_i > v_i$. Then $i$ must belong to the last step of $\mathbf{u}$. Therefore, $u_j = u_i > v_i \geq v_j$ for all $j \geq i$.\qed

\section{Proof of Lemma \ref{lemma2}}
\label{appendix2}

Suppose the staircase of $\mathbf{v}$ has $s_1$ steps and that of $\mathbf{u} = \overleftarrow{\mathbf{v}}$ has $s_2$ steps. Then the following four properties follow directly from the algorithm:
\begin{enumerate}
\item $s_2 \leq s_1$
\item For $1 \leq i < s_2$, the $i$'th step of $\mathbf{v}$ is at most $a$ times as wide as the $i$'th step of $\mathbf{u}$ and both have the same height.
\item For $1 \leq i \leq s_2$, let $i_1$ and $j_1$ be the first indices of the $i$'th steps of $\mathbf{v}$ and $\mathbf{u}$ respectively. Then $i_1 \geq j_1$.
\item If $[j, ..., n]$ is the last step of $\mathbf{u}$ then $u_i \geq v_i$ for $i \geq j$.
\end{enumerate}

To prove the lemma, we will compare the the contributions of corresponding steps of the staircases of $\mathbf{v}$ and $\mathbf{u}$ to the objective functions.

For $i < s_2$, let $[i_1, ..., i_2]$ be the $i$'th step of $\mathbf{v}$, $[j_1, ..., j_2]$ be the $i$'th step of $\mathbf{u}$ and $h = v_{i_1} = u_{j_1}$ be their common height. We have $$\sum_{k = j_1}^{j_2}{u_kf(S_k)}\ =\ h\sum_{k=j_1}^{j_2}{f(S_k)}\ \geq\ h\sum_{k=i_1}^{i_i+j_2-j_1}{f(S_k)}$$ by the third property. The monotonicity of $\mathbf{S}$ and the second property then imply $$\sum_{k = j_1}^{j_2}{u_kf(S_k)}\ \geq\ \frac{1}{a}\sum_{k = i_1}^{i_2}{v_kf(S_k)}$$ So the $i$'th step of $\mathbf{v}$ contributes at most $a$ times value to $f(\mathbf{v}, \mathbf{S})$ as the $i$'th step of $\mathbf{u}$ contributes to $f(\mathbf{u}, \mathbf{S})$, where $i < s_2$.

Finally by the fourth property, the step $s_2$ of $\mathbf{u}$ contributes more to $f(\mathbf{u}, \mathbf{S})$ than the corresponding contribution of steps $s_2, ..., s_1$ of $\mathbf{v}$ to $f(\mathbf{v}, \mathbf{S})$ combined. The result therefore follows.\qed

\end{document}